\documentclass[aps,prl,reprint,superscriptaddress]{revtex4-2}
\usepackage{amsthm}
\usepackage{amsmath,amssymb}
\usepackage{graphicx}
\usepackage{dcolumn}
\usepackage{bm}
\usepackage{bbm}

\usepackage{xcolor}
\usepackage{siunitx}
\usepackage{braket}
\usepackage{booktabs}
\usepackage{tikz}
\usepackage{xr-hyper}
\usepackage[colorlinks=true,linkcolor=blue,citecolor=blue,urlcolor=blue]{hyperref}

\usetikzlibrary{matrix,calc}

\newtheorem{theorem}{Theorem}
\DeclareMathOperator{\Tr}{Tr}
\makeatletter
\newcommand*{\rom}[1]{\expandafter\@slowromancap\romannumeral #1@}
\makeatother
\newcommand{\id}{\mathbbm{1}} 

\begin{document}

\title{Three Hamiltonians are Sufficient for Unitary $k$-Design in Temporal Ensemble}

\author{Yi-Neng Zhou$^{\dagger}$}
\email{zhouyn.physics@gmail.com}
\affiliation{Department of Theoretical Physics, University of Geneva, 24 quai Ernest-Ansermet, 1211 Genève 4, Suisse}

\author{Tian-Gang Zhou}
\thanks{These authors contributed equally to this work.}
\affiliation{DQMP, University of Geneva, 24 quai Ernest-Ansermet, 1211 Genève 4, Suisse}

\author{Julian Sonner}
\email{julian.sonner@unige.ch}
\affiliation{Department of Theoretical Physics, University of Geneva, 24 quai Ernest-Ansermet, 1211 Genève 4, Suisse}

\begin{abstract}


Unitary $k$-designs are central to quantum information and quantum many-body physics, as they provide efficient proxies for Haar-random dynamics. We study how chaotic Hamiltonian evolution can generate unitary $k$-designs using the frame potential (FP). Unlike standard approaches based on independently sampled Hamiltonians or fine-tuned evolution times, we consider a quenched temporal ensemble in which the Hamiltonians are sampled once and then held fixed, while randomness enters only through the evolution times.
We compare a two-step protocol (2SP), with evolution under two fixed Hamiltonians, to a three-step protocol (3SP) with one additional quench, with three evolution times sampled independently. Analytically and numerically, we show that the 2SP fails to realize general unitary $k$-designs, whereas the 3SP realizes them for arbitrary $k$. The random phases in the 3SP impose stronger index-matching constraints in the FP, eliminating the independent permutation degrees of freedom that still remain in the 2SP. Even with imperfect time averaging, the 3SP reaches a given accuracy within a parametrically shorter time window. These results hold universally across different symmetry classes and persist in both all-to-all and local models with random interactions.

\end{abstract}

\maketitle

\begin{figure}[t] 
    \centering \includegraphics[width=0.95\linewidth]{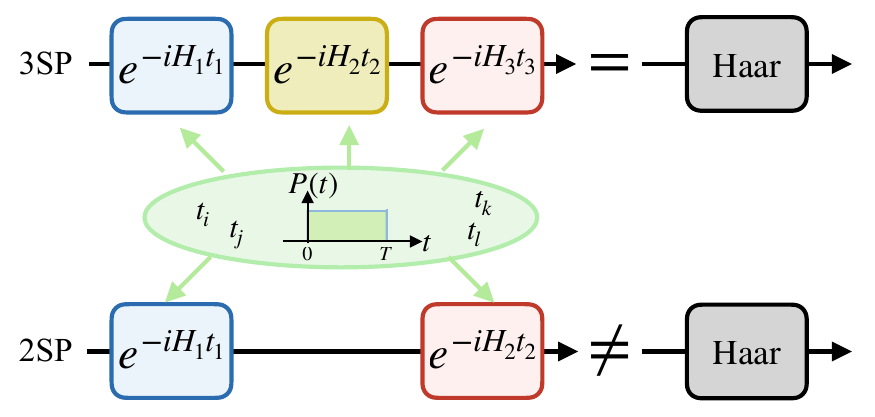} 
    \caption{
    Random ensembles generated by the two-step (2SP) and three-step (3SP) protocols with fixed Hamiltonians $H_1$, $H_2$, and $H_3$. Each box denotes evolution under the corresponding fixed Hamiltonian for a duration $t_1$, $t_2$, or $t_3$, where the times are drawn independently from the same distribution $P(t)$ illustrated by the curve within the central green circle. Theorem~\ref{thm:perfect_filter} shows that 2SP cannot realize a general unitary $k$-design, whereas 3SP can, and is also random matrix symmetry-class invariant.}
\label{temporary_ensemble_fig}
\end{figure}

    {\em \color{blue} Introduction---} Randomness is a defining feature of chaotic quantum dynamics. At sufficiently late times, complex many-body evolution is often expected to exhibit statistical properties resembling those of Haar-random unitaries, making Haar random a universal benchmark for quantum chaos, thermalization, and information scrambling \cite{Roberts_2017,Cotler_2017,Brown_2018,Liu_2018,Page_1993}. Also, random unitaries have become powerful tools throughout quantum information, with applications ranging from randomized benchmarking and quantum tomography \cite{Huang_2020,PhysRevLett.125.200501,Elben_2022,ODonnell2015EfficientQT} to randomized measurements \cite{Emerson_2005,Knill_2008}, quantum computation\cite{bernstein1993quantum}, and cryptography \cite{PortmannRenner2022RMP}. Haar averages are also analytically powerful: invariance and concentration often turn otherwise intractable averages into closed forms and controlled large-$D$ asymptotics \cite{Collins_2006,Hayden_2006,Popescu_2006}. Haar randomness thus serves as a universal benchmark for maximally random dynamics \cite{Page_1993}, and unitary $k$-designs reproduce this benchmark up to the first $k$ moments \cite{Dankert2009PRA,ambainis2007quantumtdesignstwiseindependence}.

This raises a basic question at the interface of quantum chaos and quantum information:
How simply can Haar-like randomness emerge from deterministic many-body dynamics\cite{zhou2025realizingunitarykdesignssingle}?
Exact constructions typically rely on deep local random circuits \cite{RandomUnitary2003, Dankert2009PRA, Brand2016, Brand_o_2016, Brand_o_2021, Schuster:2024ajb}, Brownian chaotic Hamiltonians with frequently modulated random parameters \cite{Jian2022LinearGO, Tiutiakina2023FramePO, Onorati_2017, Brand_o_2021, Guo_2024}, or Floquet schemes with many stroboscopic layers \cite{RandomUnitary2003,MerkelRiofrioFlammiaDeutsch2010PRA, HoChoi2022PRL,IppolitiHo2022Quantum, FarshiEtAl2022JMP}. In practice, these approaches often require sampling many independent Hamiltonian realizations or applying many layers of independently chosen random gates, which could be experimentally costly \footnote{Temporal randomization can be particularly advantageous on platforms where changing a Hamiltonian realization requires reprogramming and recalibrating many control parameters. Once a small set of Hamiltonians is calibrated, different ensemble realizations can instead be generated simply by varying their evolution times, reducing the experimental control overhead. If Hamiltonian parameters or random gates can be varied at negligible additional cost, Hamiltonian randomization may be equally convenient or even preferable.}. This motivates a more fundamental question: can the intrinsic randomness of chaotic Hamiltonian dynamics itself generate unitary designs with only minimal control? Such a mechanism would also be experimentally appealing, especially for analog quantum simulators, where evolution under a few fixed Hamiltonians is natural and considerably simpler than implementing deep random circuits or repeatedly reconfiguring the Hamiltonian \cite{zhou2025realizingunitarykdesignssingle}.

In this Letter, we study how to realize unitary $k$-designs using a minimal quenched temporal ensemble generated by time evolution under a small number of fixed chaotic Hamiltonians, where randomness enters only through sampled evolution times \cite{PhysRevX.14.041059,h6xy-zpx4,PhysRevLett.131.250401}.
The basic building block is a two-step protocol (2SP), in which the unitary is
$V(t_1,t_2)=e^{-iH_2 t_2}e^{-iH_1 t_1}$ with fixed $(H_1, H_2)$ chosen from the same random Hamiltonian ensemble and independently sampled $(t_1,t_2)$, as shown in Fig.~\ref{temporary_ensemble_fig}.
We express its frame potential (FP), which quantifies the deviation from Haar randomness \cite{Roberts_2017,Cotler_2017,Mele_2024}, in terms of inverse participation ratios (IPR)~\cite{Wegner1980, KramerMacKinnon1993, Evers_2000, Evers_2008} of the eigenbasis overlap matrix between $H_1$ and $H_2$, revealing that design formation is governed by eigenvector-overlap statistics. Analytically and numerically, we find that the 2SP fails to form a unitary $k$-design for $k>1$, whereas the 3SP with one additional quench achieves a unitary $k$-design for general $k$. The underlying mechanism admits a simple explanation: time filtering enforces energy-index matching in the FP, leaving permutation degrees of freedom. When the overlap matrices carry effectively random phases, the 3SP produces strong phase cancellations so that only a single permutation survives, yielding the Haar value.
In contrast, the 2SP retains two independent permutation freedoms and therefore a parametrically larger FP.
We also quantify finite-$T$ imperfect-filter corrections, showing they are parametrically smaller for 3SP. Surprisingly, this result is highly universal and holds for Hamiltonians drawn from the Gaussian Orthogonal, Unitary, and Symplectic Ensembles (GOE,GUE,GSE)~\cite{Collins2002,Mehta2004,Collins_2006,Anderson2009}. We verify them numerically, including various local and all-to-all randomly interacting model.

{\em \color{blue} Quenched temporal ensemble in 2SP---} We consider a unitary ensemble generated by time evolution under a fixed sequence of chaotic Hamiltonians, where the only randomness comes from the evolution times. The simplest case is the 2SP with two fixed Hamiltonians $H_1$ and $H_2$
\begin{equation}
V(t_1,t_2)=e^{-iH_2 t_2}e^{-iH_1 t_1},
\label{V_definition}
\end{equation}
for $t_1,t_2\in[0,T]$.
Sampling $t_1$ and $t_2$ independently from a distribution $P(t)$ on $[0,T]$ defines a unitary ensemble $\nu$.
Its $k$-th frame potential (FP) \cite{Roberts_2017,Cotler_2017,Mele_2024} is
\begin{equation}
F^{(k)}_{\nu}=\mathbb{E}_{V_1,V_2\sim\nu}\,|\Tr(V_1^\dagger V_2)|^{2k},
\end{equation}
with $k\in\mathbb{N}$. An ensemble forms a unitary $k$-design if and only if $F^{(k)}_{\nu}$ equals the Haar value $k!$. For 2SP, inserting Eq.~\eqref{V_definition} gives
\begin{equation}
F^{(k)}_{\mathrm{2SP}}
=\mathbb{E}_{\oplus_{j=1}^2\{t_j,t_j'\}}
\Bigl|\Tr\!\left(e^{iH_1 \Delta t_1}e^{iH_2 \Delta t_2}\right)\Bigr|^{2k},
\label{eq:Fk_trace_shifted}
\end{equation}
where $\Delta t_j=t_j-t_j'$. The temporal ensemble average above is defined as \footnote{Our definition of the temporal ensemble here differs from the previous one \cite{PhysRevX.14.041059,h6xy-zpx4, PhysRevLett.131.250401} in two respects: it involves two fixed Hamiltonians rather than one, and the evolution time $t$ is sampled from a general distribution $P(t)$. The latter may be viewed as an additional source of classical randomness. When $P(t)$ is uniform on $[0, T]$, our definition reduces exactly to the previous one, and we adopt this choice below for simplicity.} $$\mathbb{E}_{\oplus_{j=1}^2\{t_j,t_j'\}} [\,\cdot\,]
\equiv
\int_0^T \prod_{j=1}^2 dt_j\,dt_j'\,
P(t_j) P(t_j')  [\,\cdot\,].$$ 

To clarify what \eqref{eq:Fk_trace_shifted} probes, we first consider $k=1$. Expanding in the eigenbases
$H_1|E_m\rangle=E_m|E_m\rangle$ and $H_2|\epsilon_n\rangle=\epsilon_n|\epsilon_n\rangle$, the time integrals factorize, it gives
\begin{equation}
F^{(1)}_{\mathrm{2SP}}=\sum_{m,n,m',n'=1}^D |U^{(\bm{1})}_{mn}|^2\, |U^{(\bm{1})}_{m'n'}|^2\,
I_T(E_{m,m'})\;
I_T(\epsilon_{n,n'}),
\label{eq:F1_with_filter_general}
\end{equation}
with $U^{(\bm{1})}_{mn}\equiv \braket{E_m|\epsilon_n}$, $E_{m,m'}\equiv E_m-E_{m'}$, and $\epsilon_{n,n'}\equiv \epsilon_n-\epsilon_{n'}$.
The time-filter function is $I_T(\Delta E)\equiv
\left|\int_0^T\!dt\,P(t)\,e^{i\Delta E t}\right|^2$.
For concreteness, we take the uniform distribution $P(t)$
which is only nonzero on the interval $t\in[0,T]$. For a discrete, nondegenerate spectrum,
\begin{equation}
I_T(E_{m,m'})
=\operatorname{sinc}\left(E_{m,m'} T/2\right)^2,
\label{eq:IT_sinc2}
\end{equation}
with $\lim_{T\to\infty} I_T(E_{m,m'}) =
\delta_{mm'}$ and similarly for $\epsilon_{m,m'}$. Applying this to Eq.~\eqref{eq:F1_with_filter_general} yields 
$\lim_{T\to\infty} F^{(1)}_{\mathrm{2SP}}(T)
=\sum_{m,n=1}^D |U^{(\bm{1})}_{mn}|^4$,
i.e., the IPR of the change-of-basis matrix between the eigenbases of $H_1$ and $H_2$.
In particular, for a perfectly flat overlap matrix (FOM), $|\langle E_m|\epsilon_n\rangle|^2=1/D$, one obtains
$F^{(1)}_{\mathrm{2SP}}(T \to \infty)=1$, matching the Haar value at $k=1$.

We now analyze the general $k$-th FP of the quenched temporal ensemble.
Expanding the trace in Eq.~\eqref{eq:Fk_trace_shifted} in the eigenbases of $H_1$ and $H_2$, the $k$-th FP becomes
\begin{equation}
    \begin{split}
        F^{(k)}_{\mathrm{2SP}}=& 
        \sum_{m_a,n_a, m_a',n_a'=1}^D \prod_{a=1}^k \left[\Bigl|U^{(\bm{1})}_{m_a n_a} \Bigr|^2\, \Bigl|U^{(\bm{1})}_{m_a' n_a'} \Bigr|^2 \right] \\ &\mathbb{E}_{\oplus_{j=1}^2\{t_j,t_j'\}} e^{i \sum_{a=1}^k\left( \Delta t_1  E_{m_a,m'_a} + \Delta t_2  \epsilon_{n_a,n'_a}\right)}.
    \end{split}
\label{eq:FP_two_time_expanded}
\end{equation}
For long times, when $T$ exceeds the inverse energy-level spacings (the Heisenberg time), the time integral enforces the additive energy constraints
$\sum_{a=1}^k (E_{m_a}-E_{m_a'})=0$ and $
\sum_{a=1}^k (\epsilon_{n_a}-\epsilon_{n_a'})=0$. Here, the $H_1$ and $H_2$-sector constraints decouple. Assuming the absence of spectral resonances, the surviving terms correspond to independent permutations of the $k$ indices in each sector: $m_a' = m_{\pi(a)}$ and $n_a' = n_{\sigma(a)}$. This yields
\begin{equation}
    \begin{split}
    \label{eq:2step_FP_perfect}
        &F^{(k)}_{\mathrm{2SP}}= 
        \sum_{\pi,\sigma \in S_k}^D\sum_{m_a,n_a=1}^D  \prod_{a=1}^k \Bigl|U^{(\bm{1})}_{m_a n_a} \Bigr|^2\,\Bigl|U^{(\bm{1})}_{m_{\pi(a)} n_{\sigma(a)}} \Bigr|^2 .
    \end{split}
\end{equation}
In the FOM case, the calculation yields $F^{(k)}_{\mathrm{2SP}}=(k!)^{2}$, since both $\pi,\sigma$ permutations contribute. For $k>1$, this is parametrically larger than the Haar value $F^{(k)}_{\mathrm{Haar}}=k!$. Therefore, the 2SP cannot realize higher-order $k$-designs. We thus introduce one additional quench and will show below that moving to the 3SP already suffices to realize a general unitary $k$-design.

{\em \color{blue} 3SP in general $k$---} 
Now we consider the 3SP. Its time evolution is given by
\begin{equation}
V(t_1,t_2,t_3)=e^{-iH_3 t_3}e^{-iH_2 t_2}e^{-iH_1 t_1},
\end{equation}
for $t_1,t_2,t_3\in[0,T]$.
Here, $H_1, H_2, H_3$ are drawn independently and then held fixed. Its $k$-th FP is then
\begin{equation}
    \begin{split}
        F^{(k)}_{\mathrm{3SP}}
        =&\mathbb{E}_{\oplus_{j=1}^3\{t_j,t_j'\}} 
        \Bigl|\Tr\!\left(e^{iH_1 \Delta t_1}
        e^{iH_2 t_2}
        e^{iH_3 \Delta t_3}
        e^{-iH_2 t_2'}\right)\Bigr|^{2k}.
    \end{split}
\label{eq:FP_three_time_def}
\end{equation}
We expand the trace in the eigenbases of $H_1,H_2,H_3$, denoted by
$\{\ket{E_m}\}$, $\{\ket{\epsilon_p}\}$, and $\{\ket{\eta_g}\}$. The result involves the basis overlapping term
$A_{mpgf}
=
U^{(1)}_{mp}\,U^{(2)}_{pg}\,U^{(2)*}_{fg}\,U^{(1)*}_{mf}$ and the phase factor
$$
B_{mpgf}(t_j,t_j')
=
\exp\!\Bigl[
i\Delta t_1\,E_m
+i t_2\,\epsilon_p
-i t_2'\,\epsilon_f
+i\Delta t_3\,\eta_g
\Bigr],$$ where $j\in\{1,2,3\}$ and the overlap matrix is defined by $U^{(1)}_{mp}\equiv \braket{E_m|\epsilon_p}$ and $U^{(2)}_{pg}\equiv \braket{\epsilon_p|\eta_g}$.
The resulting $k$-th FP can be written as
\begin{equation}
\begin{split}
F^{(k)}_{\mathrm{3SP}}
=&\mathbb{E}_{\oplus_{j=1}^3\{t_j,t_j'\}}
 \sum_{\{\text{indexes}=1\}}^D  
\prod_{a=1}^k \Big( B_{m_ap_ag_a f_a}(t_j,t_j') \\
& B^*_{m_a'p_a'g_a' f_a'}(t_j,t_j') A_{m_ap_ag_a f_a} A^{*}_{m_a' p_a' g_a' f_a'} \Big) ,
\end{split}
\label{eq:FP_three_time_expanded_k}
\end{equation}
where $\text{indexes} = \{m_a,p_a,f_a,g_a,m_a',p_a',f_a',g_a'\}$.

In the perfect time-filtering limit, the energy constraints are
$\sum_{a=1}^k (E_{m_a}- E_{m_a'})=0$ and similarly for $\epsilon_{p_a}$, $\epsilon_{f_a}$ and $\eta_{g_a}$. A naive counting would then suggest a factor $(k!)^4$ in the FP, corresponding to $4$ independent permutations $m_a' = m_{\pi(a)},\ 
p_a' = p_{\sigma(a)},\ 
f_a' = f_{\upsilon(a)},g_a'=g_{\tau(a)}$. 
\begin{equation}
\label{eq:3step_FP_perfect}
\begin{split}
    F^{(k)}_{\mathrm{3SP}}
=&\sum_{\pi,\sigma,\tau,\upsilon\in S_k}\sum_{\{m,p,f,g\}=1}^D
\prod_{a=1}^k
\Bigl(U^{(1)}_{m_a p_a}\,U^{(2)}_{p_a g_a}\,U^{*(2)}_{f_a g_a}\,U^{*(1)}_{m_a f_a}\Bigr) \\
& \Bigl(U^{(1)}_{m_{\pi(a)} p_{\sigma(a)}}\,U^{(2)}_{p_{\sigma(a)} g_{\tau(a)}}\,U^{*(2)}_{f_{\upsilon(a)} g_{\tau(a)}}\,U^{*(1)}_{m_{\pi(a)} f_{\upsilon(a)}}\Bigr)^{*},
\end{split}
\end{equation} 

However, these permutations are not independent. The overlap structure and phase cancellation enforce adjacent-index matching across conjugate pairings. To see this, consider the factors $\langle E_{m_a}|\epsilon_{p_a}\rangle \langle \epsilon_{f_a}|E_{m_a}\rangle$
in $A_{m_a p_a g_a f_a}$, and pair them with the conjugate factors $\langle E_{m_b'}|\epsilon_{f_b'}\rangle^{*}
\langle \epsilon_{p_b'}|E_{m_b'}\rangle^{*}$
in $A^*_{m_b' p_b' g_b' f_b'}$. Choosing $b=\pi^{-1}(a)$, the index becomes $m_b'=m_a$, $p_b'=p_{\sigma\pi^{-1}(a)}$ and
$f_b'=f_{\upsilon\pi^{-1}(a)}$.
For the phases to survive, we must have $p_b'=p_a$ and $f_b'=f_a$. Otherwise, the corresponding overlaps vanish after averaging. This forces
$\pi=\sigma=\upsilon$.
Applying the same consistency condition to the remaining $\epsilon$ and $\eta$-overlaps further implies $\sigma=\tau=\upsilon$.
Hence, all $4$ permutations must coincide:
$\pi=\sigma=\tau=\upsilon$. Thus, the apparent $(k!)^4$ combinatorial freedom collapses to a single common permutation in Eq.~\eqref{eq:3step_FP_perfect}.
Consequently, the long-time limit is Haar-compatible $F^{(k)}_{\nu}(\infty)=k!$,
rather than $(k!)^4$. This is an exact demonstration of how the additional quench in the 3SP provides enough independent constraints on the overlap matrices to suppress all non-Haar contributions. Therefore, in the long-time regime, the 3SP achieves the Haar value and thus realizes a unitary $k$-design up to finite-$T$ and finite-size corrections. 

{\em \color{blue} Analysis using random matrix Hamiltonian---} 
However, almost no physically realistic Hamiltonian family strictly satisfies the flat-overlap condition $|U_{mn}|^2 = 1/D$. This naturally leads us to consider a broader and more generic setting \cite{Altland:2021rqn}. To this end, let us consider sampling the Hamiltonians of the 3SP from a random-matrix ensemble. For simplicity, we focus on the GUE Hamiltonian first\cite{Collins2002,Mehta2004,Collins_2006,Anderson2009}. Each GUE Hamiltonian has the spectral decomposition $H_l = W_l \Lambda_l W_l^\dagger$,
where $\Lambda_l$ is diagonal and $W_l$ is Haar random. Here, $l=1,2,3$ labels the distinct Hamiltonian realizations in the 2SP or 3SP protocol. Consequently, the eigenbasis overlap matrices are themselves unitary with $U^{(\bm{1})}_{mn} \equiv \braket{E_m|\epsilon_n} = (W_1^\dagger W_2)_{mn}$,
    $U^{(\bm{2})}_{pg} \equiv \braket{\epsilon_p|\eta_g} = (W_2^\dagger W_3)_{pg}$.
Because $W_1$, $W_2$, and $W_3$ are independent Haar-random unitaries, they are invariant under left and right multiplication. Both $U^{(\bm{1})}=W_1^\dagger W_2$ and $U^{(\bm{2})}=W_2^\dagger W_3$ are Haar distributed, and in fact are mutually independent.

Consider the perfect time-filtering case, where the permutation structure is as shown in Eqs.~\eqref{eq:2step_FP_perfect} and~\eqref{eq:3step_FP_perfect}. Averaging over the Haar-random overlap matrices gives the following result for the general $F^{(k)}$ in the 2SP and 3SP. Detailed proofs using Weingarten function are provided in SM~\cite{SM}.
\begin{theorem}[Perfect-filter limits of the 2SP and 3SP]
\label{thm:perfect_filter}
In the perfect-filter limit $T\to\infty$, let
$U^{(\bm{1})},U^{(\bm{2})}\in U(D)$ be the independent Haar-random
overlap matrices. Assume that the frame potentials are self-averaging in the large-$D$ limit.
Then, as $D\to\infty$ with $k$ fixed,
\begin{align}
\mathbb E_{U^{(\bm{1})}}\!\left[F_{\mathrm{2SP}}^{(k)}\right]
&=k!\sum_{j=0}^k \binom{k}{j}\,!(k-j)\,2^j+O(D^{-1}),\\
\mathbb E_{U^{(\bm{1})},U^{(\bm{2})}}\!\left[F_{\mathrm{3SP}}^{(k)}\right]
&=k!+O(D^{-1}),
\end{align}
where $!n$ is the derangement number.
\end{theorem}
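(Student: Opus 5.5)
The plan is to average the perfect-filter expressions Eq.~\eqref{eq:2step_FP_perfect} and Eq.~\eqref{eq:3step_FP_perfect} over the Haar-distributed overlap matrices, using the Weingarten calculus and keeping only the leading order in $D$. For fixed $k$, a product of $q$ entries of $U$ and $q$ entries of $U^*$ has average
\begin{equation*}
\sum_{\alpha,\beta\in S_q}\delta(\text{indices}\mid\alpha,\beta)\,\mathrm{Wg}(\alpha\beta^{-1},D),
\qquad
\mathrm{Wg}(\alpha\beta^{-1},D)=D^{-q}\delta_{\alpha\beta}+O(D^{-q-1}).
\end{equation*}
To leading order this is Wick's theorem for i.i.d.\ complex Gaussian entries of variance $1/D$: each $U$ is paired with a $U^*$ carrying the same row and column index. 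For every (permutation, pairing) term I will count $D^{\#\text{free index sums}}\times D^{-\#\text{pairs}}$. The claim to establish is that the net exponent is at most $0$, with equality only for the configurations identified below. Terms with coinciding summation indices or non-diagonal Weingarten contributions are then $O(D^{-1})$. The self-averaging assumption in Theorem~\ref{thm:perfect_filter} is used only to identify the averaged frame potential with the typical one; the proof itself computes the average.

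\textbf{2SP.} Here only the moduli $x_{mn}=|U^{(\bm 1)}_{mn}|^2$ enter. There are $2k$ free indices $m_a,n_a$ and $2k$ factors $x$, each of order $D^{-1}$. When all $m_a$ are distinct and all $n_a$ are distinct, the pair $(m_{\pi(a)},n_{\sigma(a)})$ equals some $(m_c,n_c)$ exactly when $\pi(a)=\sigma(a)=c$, i.e.\ when $a$ is a fixed point of $\rho=\pi^{-1}\sigma$.
\begin{itemize}
\item For such a coincidence, $\mathbb E x^2=2/D^2+O(D^{-3})$.
\item For distinct entries, the expectation factorizes to $D^{-1}$ each, up to $O(D^{-1})$ relative corrections from Weingarten.
\end{itemize}
Summing over the $\sim D^{2k}$ distinct-index tuples therefore gives $2^{\mathrm{fix}(\rho)}+O(D^{-1})$. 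Index collisions remove a free sum without enhancing the moments by a compensating factor of $D$, so they are subleading. Summing over $\pi$ (a factor $k!$) and over $\rho\in S_k$, and grouping $\rho$ by its number $j$ of fixed points, gives $\binom{k}{j}\,!(k-j)$ permutations with weight $2^j$. This is the first formula.

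\textbf{3SP.} Here there are $4k$ free indices $m,p,f,g$. The factors are $2k$ entries of $U^{(1)}$ with $2k$ conjugates, and likewise for $U^{(2)}$. The two matrices are averaged independently, giving $2k+2k$ Wick pairs and a weight $D^{-4k}$, so a leading term needs all $4k$ indices free and distinct.
\begin{itemize}
\item Term $a$ of $A$ contains $U^{(1)}_{m_ap_a}$, and term $b$ of $A^*$ contains $U^{(1)*}_{m_{\pi(b)}p_{\sigma(b)}}$.
\item The within-$A$ pairing of $U^{(1)}_{m_ap_a}$ with $U^{(1)*}_{m_af_a}$ forces $p_a=f_a$, which loses a free sum and costs $D^{-1}$.
\item The pairing across to $A^*$ with distinct indices requires $\pi(b)=\sigma(b)=a$.
\end{itemize}
Running this argument over the four factor types $U^{(1)}_{mp}$, $U^{(1)*}_{mf}$, $U^{(2)}_{pg}$ and $U^{(2)*}_{fg}$ gives $\pi=\sigma$, $\pi=\upsilon$ and $\sigma=\tau=\upsilon$, which is the heuristic of the main text made precise. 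Each of the $k!$ common permutations then contributes exactly $1$, since all Gaussian pair moments equal $1/D$ and there are $D^{4k}(1+O(D^{-1}))$ index tuples. This gives $k!+O(D^{-1})$.

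\textbf{Main obstacle.} The hard step is rigorously excluding mixed configurations in which some indices coincide and simultaneously some Wick pairings are non-crossing. Examples are $p_a=f_a$ together with $m_a=m_b$, where one might fear that collapsed sums are compensated by enhanced moments such as $\mathbb E|U|^4$ or by a cancellation of phases. My first approach is to encode each configuration as a graph whose vertices are index classes and whose edges are Wick pairs. I would then show that the $D$-exponent equals (number of vertex classes) minus (number of pairs), and bound it by $0$ through a tree/cycle counting argument in the style of the genus expansion of Weingarten sums. Equality should hold only when every pair connects $A$ to $A^*$ with all indices distinct, which reproduces the common-permutation configuration. Since $k$ is fixed, the finitely many remaining terms, including the $O(D^{-q-1})$ Weingarten tails, give the uniform $O(D^{-1})$ error.
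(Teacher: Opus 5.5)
Your proposal is correct and follows essentially the paper's route: leading-order Weingarten (Wick) pairings, with the fixed points of $\pi^{-1}\sigma$ giving the factor $2^{\mathrm{fix}}$ in the 2SP, and the requirement that no pairing impose an index identification forcing $\pi=\sigma=\tau=\upsilon$ in the 3SP. The step you flag as the main obstacle is actually immediate, and no genus-type argument is needed: the Weingarten expansion is exact for coincident indices, and for each fixed pairing the number of index classes cannot exceed the number of index variables, which equals the number of Wick pairs ($2k$ resp.\ $4k$), so the exponent is at most $0$, with equality exactly when no identification is imposed, while off-diagonal Weingarten terms carry an extra $D^{-1}$.
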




The theorem reveals that compared with 2SP, one extra quench on 3SP imposes stronger constraints on the energy eigenbasis, leading to much better approximation to a unitary $k$-design.

{\em \color{blue} General symmetry classes---}
Theorem~\ref{thm:perfect_filter} was stated for the GUE. However, interestingly, the superiority of 3SP still exists for other random matrix symmetry classes. As shown in SM~\cite{SM}, we argue this result by directly analyzing the leading-order contribution instead of providing the strict orthogonal and symplectic Weingartein calculus. 

We find that the leading order contribution in 2SP depends on the details of the random matrix statistics, which makes it symmetry class depended. Starting from  Eq.~\eqref{eq:2step_FP_perfect}, we can write $
\prod_{a=1}^k |U_{m_a n_a}|^2\,|U_{m_{\pi(a)} n_{\sigma(a)}}|^2
=\prod_{r=1}^{k}|U_{m_r n_r}|^2\,\bigl|U_{m_r\, n_{\sigma\pi^{-1}(r)}}\bigr|^2$. For each choice of $\sigma\pi^{-1}$, it can lead to evaluations of diagonal term $ \mathbb E |U_{m_r n_r}|^4 \equiv \mu_{\beta}M_{\beta}$ or off-diagonal term $\mathbb E |U_{m_r n_r}|^2 |U_{m_r n_r'}|^2\equiv M_{\beta} $ in different replica pairs. Here $\beta=1,2,4$ is the Dyson index that labels GOE,GUE,GSE seperately. Finally, consider all types of permutation the results gives
\begin{equation}
\label{eq:2SP_class_main}
F^{(k)}_{\mathrm{2SP}}(\infty)
= \lambda_{\beta}^{\,k}\; k!\sum_{j=0}^{k}\binom{k}{j}\,!(k-j)\,\mu_{\beta}^{\,j}
\bigl[1+O(k^{2}/D)\bigr],
\end{equation}
with $(\mu_\beta,\lambda_\beta)=\bigl(3,\tfrac{D}{D+2}\bigr)$, $\bigl(2,\tfrac{D}{D+1}\bigr)$, $\bigl(\tfrac32,\tfrac{4D}{D+1}\bigr)$. The $\beta=2$ case reproduces the 2SP result of Theorem~\ref{thm:perfect_filter}.

The 3SP, by contrast, is independent of the symmetry class at leading order, since class-dependent effects arise only from subleading diagonal contributions. First, as before, averaging products of overlap-matrix elements rather than their squared moduli, such as $\mathbb{E}\!\left[U^{(1)*}_{m_a p_a}U^{(1)}_{m_a p_{\sigma\pi^{-1}(a)}}\right]$, enforces all permutations $\pi$,$\sigma$,$\tau$,$\upsilon$ equal in Eq.~\eqref{eq:3step_FP_perfect}. Consequently, the off-diagonal terms, $\mathbb{E}\!\left[|U_{m_r p_r}|^2|U_{m_r f_r}|^2\right]$ with $p_r\neq f_r$, provide the leading contribution, while the class-dependent diagonal terms are subleading. A detailed evaluation yields the same leading contribution for all three symmetry classes
$F^{(k)}_{\mathrm{3SP}}(\infty)=k!+O(1/D)$ for $\beta=1,2,4$.

{\em \color{blue} Imperfect time-filter error---} The perfect-filter analysis assumes $T\to\infty$, so that the time filter $I_T(\Delta E)$ reduces to a Kronecker delta, enforcing exact energy-index matching. At finite $T$, this matching is imperfect: the permutation constraints among the indices are relaxed compared with Eqs.~\eqref{eq:2step_FP_perfect} and \eqref{eq:3step_FP_perfect}. To quantify this leakage, we decompose $I_T(E_m-E_{m'})=\delta_{mm'}+\widetilde I_{mm'}(T)$ 
where the leakage term $\widetilde I_{mm'}(T)\equiv (1-\delta_{mm'})\,I_T(E_m-E_{m'})$.
We characterize the typical off-diagonal weight by
\begin{equation}
\varepsilon_H(T)\equiv \frac{1}{D(D-1)}\sum_{m\neq m'} I_T(E_m-E_{m'}),
\label{eq:error_eps_def}
\end{equation}
which depends on the Hamiltonian spectrum.

For the uniform time distribution with filter function given in Eq.~\eqref{eq:IT_sinc2}, the individual filter function for a typical nonzero gap $\Delta E$ scales as $I_T(\Delta E)=\operatorname{sinc}^2(\Delta E T/2)$, so that for $T\Delta E\gg 1$ one has $\varepsilon_H(T)=O\!\left((T\Delta E)^{-2}\right)$.
For a standard Wigner-scaled GUE Hamiltonian, the bulk mean level spacing scales as $\Delta E=\Theta(1/D)$, which gives $
\varepsilon_H(T)=O\!\left(D^2/T^2\right)$. We now bound the finite-$T$ corrections. The leakage counting is purely combinatorial and thus applies equally to different symmetry classes $\beta=1,2,4$, detailed proofs are in SM~\cite{SM}.
\begin{theorem}[Finite-time filter]
\label{thm:imperfect}
Let the overlap matrices be Haar-random on the group of the relevant symmetry class and self-averaging at large $D$, and let $F^{(k)}_{\nu}(T)$ denote the $k$-th frame potential evaluated with the finite-time filter $I_T(\Delta E)$. Then, with $\varepsilon_H(T)$ defined in Eq.~\eqref{eq:error_eps_def},
\begin{align}
\label{eq:thm_imperfect_2SP}
\mathbb E\!\left[\bigl|F^{(k)}_{\mathrm{2SP}}(T)-F^{(k)}_{\mathrm{2SP}}(\infty)\bigr|\right]
&=O\!\bigl(D\,\varepsilon_H(T)\bigr)+O(D^{-1}),\\
\label{eq:thm_imperfect_3SP}
\mathbb E\!\left[\bigl|F^{(k)}_{\mathrm{3SP}}(T)-F^{(k)}_{\mathrm{3SP}}(\infty)\bigr|\right]
&=O\!\bigl(\varepsilon_H(T)\bigr)+O(D^{-1}),
\end{align}
where $F^{(k)}_{\nu}(\infty)$ are the perfect-filter values in Theorem~\ref{thm:perfect_filter}, with Eq.~\eqref{eq:2SP_class_main} giving the 2SP value for general $\beta$.
\end{theorem}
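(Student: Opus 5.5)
The plan is to expand every filter factor as $I_T(E_m-E_{m'})=\delta_{mm'}+\widetilde I_{mm'}(T)$. For 2SP there are $k$ such factors in the $H_1$ sector and $k$ in the $H_2$ sector. For 3SP there are four sectors ($E$, $\epsilon_p$, $\epsilon_f$, $\eta$). First I would rewrite the finite-$T$ frame potential so that the time integrals are performed before the energy constraints are imposed. The ensemble average then replaces the constraint by a product of filters on paired indices. Concretely, I would start from Eqs.~\eqref{eq:FP_two_time_expanded} and \eqref{eq:FP_three_time_expanded_k} and treat the generic (no-resonance) spectrum as the paper does. The multi-index phase then factorizes over the pairing structure selected by a permutation. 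The filter is applied pairwise, $m'_a\leftrightarrow m_{\pi(a)}$, and matching fails only through off-diagonal $\widetilde I$ entries. Expanding the product in powers of $\widetilde I$, the zeroth-order term reproduces exactly $F^{(k)}_\nu(\infty)$ of Theorem~\ref{thm:perfect_filter} or Eq.~\eqref{eq:2SP_class_main}. The difference is therefore a finite sum (its number of terms depends only on $k$) of terms carrying at least one leakage factor.

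Next I would bound the first-order leakage terms, which are the dominant ones. In each term exactly one index pair, say $m'_b=m''\neq m_{\pi(b)}$, is freed and weighted by $\widetilde I_{m_{\pi(b)}m''}$. I would take the Haar (or orthogonal/symplectic) average of the overlap moments with this pair unmatched, using the same leading-order Weingarten counting as in the proof of Theorem~\ref{thm:perfect_filter}. Self-averaging then lets me replace $\sum_{m\neq m''}\widetilde I_{mm''}$ by $D(D-1)\varepsilon_H(T)$.

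For 2SP the freed index enters only through squared moduli $|U_{m''n}|^2$, whose average is $1/D$ with no phase cancellation. The free sum over $m''$ therefore costs one factor $D$ relative to the diagonal term and gains a factor $D^{-1}$ from the moment. Combined with the $D^{-1}$-normalized $\varepsilon_H$ counting, the net contribution is $O(D\varepsilon_H)$.

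For 3SP the freed index $m''$ sits inside the product $U^{(1)}_{m''p}U^{(1)*}_{m''f}$. The phase-cancellation argument that forced $\pi=\sigma=\tau=\upsilon$ now forces the pair $(p,f)$ to be matched. The corresponding Weingarten leading term then loses a full factor $D$ compared with the 2SP case, because there is one fewer free index loop. This yields $O(\varepsilon_H)$.

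Higher-order terms carry $\varepsilon_H^r$ with at most $D^r$ (2SP) or $D^0$ (3SP) enhancement. In the regime $D\varepsilon_H\ll1$ they are subleading, and otherwise they can be absorbed into the $O(\cdot)$. Subleading Weingarten corrections, resonant index coincidences among the $\epsilon$'s, and fluctuations around the self-averaged values all produce $O(D^{-1})$. Taking the absolute value inside the expectation is handled by the triangle inequality, since each term is bounded by its average up to these fluctuations.

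I expect the main obstacle to be the 3SP first-order counting. One must show rigorously that unmatching a single index never leaves a term with a surviving extra free index loop, that is, that every leakage diagram is suppressed by a full $1/D$ relative to the 2SP analogue, uniformly over the four sectors and all symmetry classes. I would first verify this diagrammatically for $k=1$ and $k=2$ with GUE Weingarten functions. I would then argue by induction on the cycle structure of the common permutation, using only the leading-order orthogonality $\mathbb E[U_{mp}U^*_{m'p'}]\propto\delta_{mm'}\delta_{pp'}/D$, which holds in all three classes.
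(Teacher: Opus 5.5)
Your first-order analysis is the paper's: break one replica pair, $m'_b=\tilde m\neq m_{\pi(b)}$, and apply the leading Weingarten rule. It needs two local repairs.

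\emph{2SP bookkeeping.} Your sentence ``costs one factor $D$ \ldots\ gains $D^{-1}$ \ldots\ $D^{-1}$-normalized $\varepsilon_H$'' does not multiply out to $D\varepsilon_H$. The correct count runs as follows. The weighted double sum $\sum_{m_{\pi(b)}}\sum_{\tilde m\neq m_{\pi(b)}}I_T(E_{m_{\pi(b)}}-E_{\tilde m})=D(D-1)\varepsilon_H$ replaces a single free sum $D$. The leading Weingarten coefficient is still $D^{-2k}$. The leaked pair is merely restricted to the non-swap, so the pairing group is $(\mathbb Z_2)^{k-1}$ instead of $(\mathbb Z_2)^k$. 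The result is $k(D-1)\varepsilon_H\,|G_m(\pi_b)\cap G_n(\sigma)|$.

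\emph{3SP counting.} The step you list as the main obstacle needs no small-$k$ checks or induction. Rows $m_{\pi(b)}$ and $\tilde m$ of $U^{(1)}$ each carry exactly one $U^{(1)}$ and one $U^{(1)*}$. On the other $k-1$ rows a non-swap costs a further $1/D$, so at leading order they are full swaps. These force $\sigma\pi^{-1}$ and $\upsilon\pi^{-1}$ to fix $k-1$ points, hence to be the identity. Both special rows then impose the same single identification $p_{\pi(b)}=f_{\pi(b)}$, which cancels the extra $\tilde m$ sum. That is exactly how the paper gets $O(\varepsilon_H)$.

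\textbf{The gap.} To control everything beyond first order you assume the filter factorizes into $k$ pairwise factors per sector. It does not. Averaging over $t_1,t_1'$ gives a single factor $I_T\bigl(\sum_{a=1}^k(E_{m_a}-E_{m'_a})\bigr)$ per sector. This equals $\widetilde I_{m_{\pi(b)}\tilde m}$ only when exactly one pair is broken. With $r\ge 2$ broken pairs, the weight is $I_T$ of a sum of $r$ gaps, and such near-resonances are far more numerous than products of single-gap ones. So your expansion in powers of $\widetilde I$ is not an identity for the frame potential, and the claimed $\varepsilon_H^r D^r$ control of higher orders does not follow.

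In the 2SP nothing cancels these terms. Take $k=2$ and the tuples with $n'_a=n_a$ and $\{m'_1,m'_2\}\cap\{m_1,m_2\}=\emptyset$. Each $m$-tuple carries positive Haar weight $\approx D^{-2}$ after the $n$ sums. Meanwhile $\sum I_T(E_{m_1}+E_{m_2}-E_{m'_1}-E_{m'_2})\simeq c\,D^4/T$, because these four-level combinations have a smooth density of order $D^4$ near zero. The resulting correction is of order $D^2/T$. This exceeds $O(D\varepsilon_H)+O(D^{-1})$ throughout $D\ll T\ll D^3$, even granting $\varepsilon_H=O(D^2/T^2)$. In the 3SP, by contrast, each broken $m$-row again forces a $p=f$ identification, which cuts such channels down to $O(1/T)$; the 3SP half survives this check.

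The paper's proof does not assert the factorization. It evaluates only the single-pair channels and declares the rest subleading, so it leaves the same hole unfilled. Your proposal fills that hole with a false step, and the 2SP bound cannot be closed along these lines. Relatedly, for $D\varepsilon_H\gtrsim 1$ the higher orders cannot be ``absorbed into the $O(\cdot)$'', since $(D\varepsilon_H)^r$ is then not $O(D\varepsilon_H)$.
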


The mechanism is a single counting step. Leakage replaces one diagonal factor by an off-diagonal sum over a new index $\tilde m\neq m$, which therefore contributes one extra free summation, i.e.\ a factor $D$. In the 3SP this leakage kills leading contribution, and the subleading factor $1/D$ enters and leaving $O(\varepsilon_H(T))$.

Hence the 3SP achieves a fixed accuracy with a parametrically shorter filter window, with suppression $O(1/\sqrt{D})$ compared with 2SP, proving the advantage of 3SP in practical realizations. 
Physically, the additional energy indices introduced by the extra quenched Hamiltonian also suppress the leakage of energy levels due to finite time windows. For uniform time sampling, the finite-$T$ correction scales as $O(D^2/T^2)$ for 3SP, requiring $T\gtrsim D/\sqrt{\delta}$ for accuracy $\delta$. However, this timescale is not a fundamental lower bound and may be reduced by optimizing time sampling or introducing limited Hamiltonian randomization or random Pauli rotations.

\begin{figure}
    \centering
    \includegraphics[width=\linewidth]{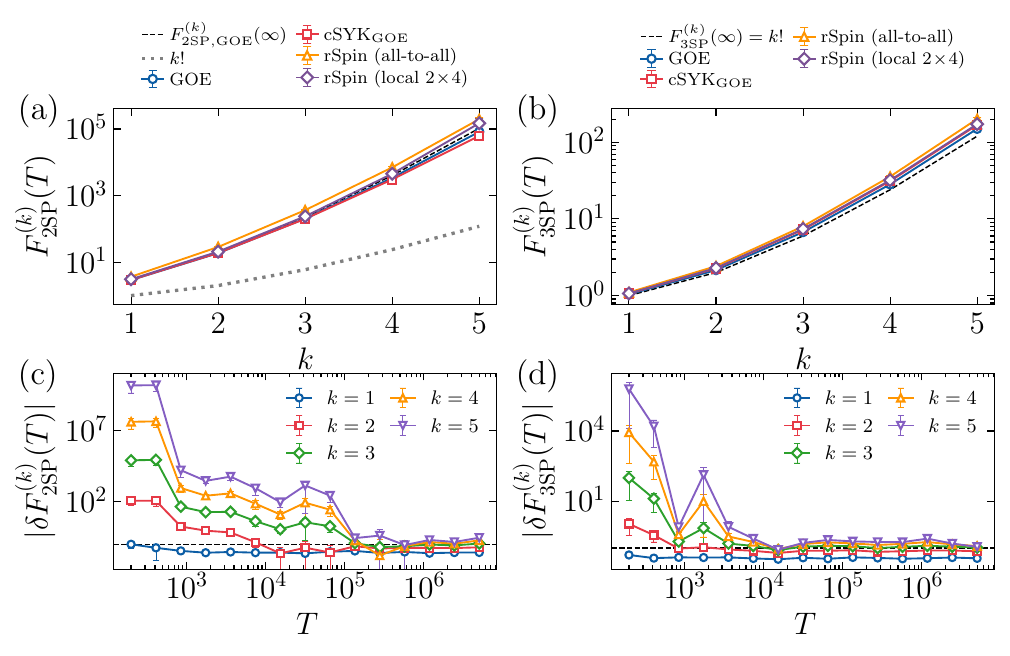}
    \caption{Numerical FPs for the 2SP and 3SP under GOE, cSYK, and random-spin dynamics in the orthogonal symmetry class. We use $D=100$ for GOE and $N=8$ at half-filling for cSYK and the two random-spin settings defined by Eq.~\eqref{eq:random-spin_Hamiltonion}. Each data point is averaged over $10^{5}$ time-sampling realizations. (a),(b) FP versus design order $k$ for 2SP and 3SP. The black dashed lines indicate the infinite-time predictions: the orthogonal-class value $F^{(k)}_{\mathrm{2SP,GOE}}(\infty)$  in Eq.~\eqref{eq:2SP_class_main} in (a), and $F^{(k)}_{\mathrm{3SP}}(\infty)=k!$ from Theorem~\ref{thm:perfect_filter} in (b). The Haar value $k!$ is shown as a gray dotted line. We use $T\approx 10^6$
    to approximate the $T\to\infty$ limit.
    (c),(d) Relative finite-time deviations for the GOE 2SP and 3SP, respectively, plotted versus the filter window $T$. The horizontal dashed line indicates the $10^{-1}$ threshold.}
    \label{fig:numerics}
\end{figure}

{\em \color{blue} Numerics---} To verify our analytical results, we numerically evaluate the FP for 2SP and 3SP using three Hamiltonian ensembles. We first use GOE random matrices as a clean benchmark, matching the symmetry class of the two experimentally motivated models considered below. We generate $H=\frac{X+X^T}{\sqrt{2D}},  \ X_{ij}\sim\mathcal{N}(0,1)$ with independent entries. As a fermionic example, we consider the complex SYK (cSYK) model~\cite{SachdevYe1993PRL,Kitaev2015Talks,MaldacenaStanford2016PRD,MaldacenaStanfordYang2016PTEP,BagretsAltlandKamenev2016NPB,GuQiStanford2017JHEP,SongJianBalents2017PRL},
\begin{equation}
    H=\sum_{i<j}\sum_{k<l}
    J_{ij;kl}\,
    c_i^\dagger c_j^\dagger c_k c_l
    +\mathrm{H.c.},
\end{equation}
with $J_{ij;kl}\sim
\mathcal{N}_{\mathbb{C}}
\!\left(0,\frac{6J^2}{N^3}\right)$. Motivated by possible realizations in cavity- and circuit-QED platforms~\cite{2023cavity,Cao_2020,baumgartner2024quantumsimulationsachdevyekitaevmodel,baumgartner2025quantumsimulationusingtrotterized}, we take $N=8$ and restrict to the half-filling sector $Q=N/2$, which belongs to the GOE symmetry class. We describe both random-spin settings by the unified Hamiltonian
\begin{equation}
    H_{\mathcal E}=\sum_{(i,j)\in\mathcal E}
    \left[J_{ij}\left(S_i^xS_j^x+S_i^yS_j^y\right)
    +V_{ij}S_i^zS_j^z\right]
    +\sum_i h_iS_i^z.
    \label{eq:random-spin_Hamiltonion}
\end{equation}
For the all-to-all random spin (rSpin) model, $\mathcal E=\{(i,j):i<j\}$, $V_{ij}=-2J_{ij}$, and $J_{ij}\sim\mathcal N(0,4J^2/N)$. This setting mimics random dipolar-spin networks relevant to nuclear magnetic resonance~\cite{Suter06,Suter07,liEmergentUniversalQuench2024,liErrorResilientReversalQuantum2026} and NV centers~\cite{Lukin17NV,Lukin18NV}. For the local rSpin model, $\mathcal E$ contains the nearest-neighbor interaction of an open $2\times4$ lattice. The couplings $J_{ij}$ and $V_{ij}$ are independent random signs with magnitude $2J/\sqrt N$. This setting represents independently tunable hopping and density interactions in superconducting circuits~\cite{nguyenProgrammableHeisenbergInteractions2024}. In both cases, $h_i\sim\mathcal N(0,h^2)$. We use $N=8$, $J=1$, $h=0.2$, and restrict to the half-filling sector.

Figure~\ref{fig:numerics}(a) shows that 2SP does not realize a unitary $k$-design for $k>1$. The GOE FP agrees with the orthogonal-class prediction $F^{(k)}_{\mathrm{2SP,GOE}}(\infty)$ in Eq.~\eqref{eq:2SP_class_main}, and remains parametrically above the Haar value $k!$. The cSYK and local U(1) data closely track the same prediction. The all-to-all rSpin FP lies above it, indicating more structured eigenbasis overlaps for that realization.
Fig.~\ref{fig:numerics}(b) shows that 3SP brings all four ensembles close to the Haar prediction $F^{(k)}_{\mathrm{3SP}}(\infty)=k!$ from Theorem~\ref{thm:perfect_filter}, including both local and all-to-all model.

We further quantify finite-$T$ effects using the relative deviation $|\delta F^{(k)}_{\nu}(T)|\equiv\bigl|F^{(k)}_{\nu}(T)/F^{(k)}_{\nu}(\infty)-1\bigr|$, where $\nu=\mathrm{2SP},\mathrm{3SP}$ and the 2SP denominator is the GOE prediction in Eq.~\eqref{eq:2SP_class_main}. The results are shown in Figs.~\ref{fig:numerics}(c) and \ref{fig:numerics}(d). In both protocols, the error decreases with increasing $T$, consistent with the scaling of Theorem~\ref{thm:imperfect}. Defining $T^*$ as the smallest $T$ such that $|\delta F^{(k)}(T)|<\gamma$ with $\gamma=10^{-1}$, we find a clear separation of time scales: for $k=3$,
$T^*_{\mathrm{2SP}}\approx 10^{5}$ but
$T^*_{\mathrm{3SP}}\approx 10^{3}$. 
This agrees with the analytical prediction that, up to nonuniversal finite-$D$ prefactors, the 3SP reaches a fixed accuracy with a parametrically shorter filter window than the 2SP.

\begin{figure}[ht]
    \centering
    \includegraphics[width=1.0\linewidth]{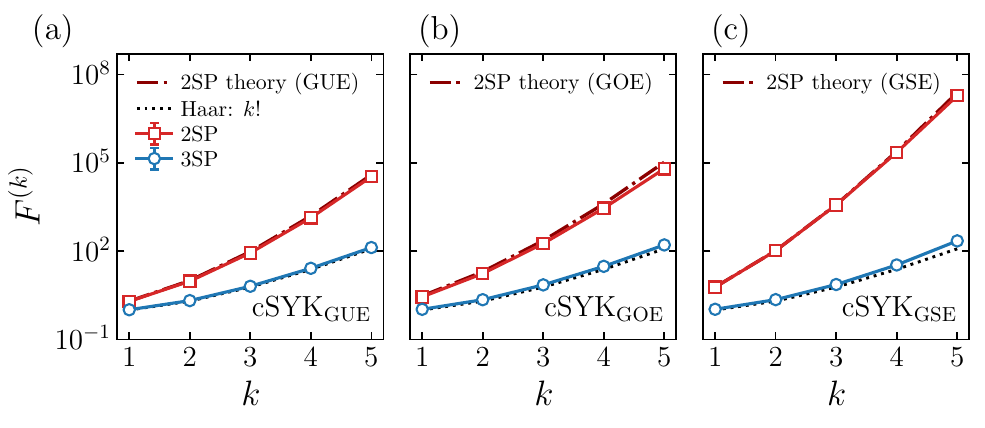}
    \caption{Symmetry-class independence of the 3SP. FPs of the 2SP and 3SP for the cSYK model in charge sectors realizing the three Wigner--Dyson classes: (a) $\mathrm{cSYK_{GUE}}$ ($N=8$, $Q=3$, $D=56$), (b) $\mathrm{cSYK_{GOE}}$ ($N=8$, $Q=4$, $D=70$), and (c) $\mathrm{cSYK_{GSE}}$ ($N=10$, $Q=5$, $D=252$). We use $T\approx 1.2\times 10^{6}$, uniform time sampling, and $10^{5}$ ($5\times 10^{4}$ for GSE) realizations for a fixed draw of $(H_1,H_2,H_3)$. The dark red dash-dotted lines show the 2SP prediction for the corresponding symmetry class, Eq.~\eqref{eq:2SP_class_main}, which reduces to Theorem~\ref{thm:perfect_filter} for GUE in panel (a). The black dotted line is the Haar value $k!$. The 2SP FP is strongly class-sensitive, while the 3SP FP converges to $k!$ in all three classes.}
    \label{fig:rmt_sectors}
\end{figure}

Finally, we test Eq.~\eqref{eq:2SP_class_main} and directly verify that the 3SP is independent of the symmetry class by exploiting the fact that different charge sectors of the cSYK model realize all three Wigner--Dyson classes. At half-filling, the model belongs to the GOE for system size $N\equiv 0 \pmod 4$ and to the GSE for $N\equiv 2 \pmod 4$. In addition, sectors away from half filling with $Q\neq N/2$ realize the GUE. As shown in Fig.~\ref{fig:rmt_sectors}, the 2SP FP depends strongly on the symmetry class and remains far from the Haar value, in agreement with Eq.~\eqref{eq:2SP_class_main}. In contrast, the 3SP FP converges to the Haar value $k!$ in all three classes, with residual deviations at $k=4,5$ consistent with the finite $D$ and finite $T$ corrections in Theorem~\ref{thm:imperfect}. The 3SP therefore provides a construction of unitary $k$ designs that is independent of the symmetry class, confirming the universality of the protocol and broadening its scope for practical applications.

{\em \color{blue} Outlook---} In this Letter, we show that unitary $k$-designs can be generated with minimal control using a quenched temporal ensemble. While the 2SP fails for higher-order designs, the 3SP realizes general unitary $k$-designs through the combined effects of temporal dephasing and phase cancellations in eigenbasis overlaps. We also quantify finite-$T$ corrections from imperfect temporal filtering. These results identify a minimal dynamical route to unitary $k$-designs using only three fixed chaotic Hamiltonians. More importantly, this route is symmetry-class independent: while the 2SP FP sharply distinguishes the three Wigner-Dyson classes, the additional quench of the 3SP erases the class information and drives the ensemble to Haar in all three classes.

The simplicity of the 3SP makes it attractive for quantum simulation, where evolution under a few fixed Hamiltonians can be substantially easier to implement than deep random circuits or Hamiltonian randomization. It is thus natural to explore the protocol experimentally on platforms already used to probe scrambling and randomized observables~\cite{Vermersch_2018,Vermersch_2019,Elben_2019,PhysRevLett.120.050406,Joshi_2020,Elben_2022,zhou2026measuringrenyientropyusing}, as well as in hybrid schemes combining temporal randomization with digital control or Hamiltonian randomization~\cite{zhou2025realizingunitarykdesignssingle}. More broadly, our results connect unitary design formation to the statistical structure of Hamiltonian eigenbases. This provides a natural link to $k$-freeness, the eigenstate thermalization hypothesis, and Hilbert-space ergodicity~\cite{PhysRevX.14.041059}. Our work suggests that higher-order ergodicity depends not only on evolution time but also on the structure of the temporal ensemble. An open question is how closely few-quench ensembles can approach Haar randomness as a function of quench number, time sampling, and chaoticity.

 {\em \color{blue} Acknowledgments---} We thank Sara Murciano, Chang Liu, Pengfei Zhang, Ning Sun, Michelle Xu, and Wen Wei Ho for helpful discussions. T.-G. Z. was supported by the Swiss National
Science Foundation under Division II (Grant No. 200020-219400).

\bibliography{ref.bib}

\clearpage
\appendix

\onecolumngrid

\section{Appendix A: Haar random and Weingarten Calculus}

For four sequences $\mathbf{i}=\left(i_1,\dots,i_p\right)$, $\mathbf{j}=\left(j_1,\dots,j_p\right)$, $\mathbf{i}^\prime=\left(i_1^\prime,\dots,i_p^\prime\right)$ and $\mathbf{j}^\prime=\left(j_1^\prime,\dots,j_p^\prime\right)$, a convenient closed form expression for averages of unitary matrices is given by

\begin{equation}\label{eq:haar_average}
    \int dU \ U_{i_1j_1}\ldots U_{i_pj_p} \left(U_{i^\prime_1j^\prime_1}\ldots U_{i^\prime_p,j^\prime_p}\right)^{\ast}   =\sum_{\sigma,\tau\in S_p}\delta_\sigma(\mathbf{i},\mathbf{i}^\prime)\delta_\tau(\mathbf{j},\mathbf{j}^\prime)\text{Wg}_D([\sigma\tau^{-1}])\, \quad 
\end{equation}

where $\sigma$ is the conjugacy class of element $\sigma$, $\text{Wg}_d([\sigma\tau^{-1}])$ is the unitary Weingarten function and invariant under the conjugacy class.  $U$ is a Haar-random $d\times d$ unitary matrix, $dU$ is the Haar measure over $U(d)$. And the index permutation delta function is defined as 
\begin{equation}
\delta_\sigma(\mathbf{i},\mathbf{i}^\prime) = \prod_{s=1}^p\delta_{i_{\sigma(s)},i_s^\prime}.
\end{equation}

If the lengths of $U$ and $U^*$ are different, then the result Eq.~\eqref{eq:haar_average} is zero.

Now we consider the large $D$ limit and fix $p$. The unitary Weingarten function satisfies
\begin{equation}
\label{eq:Wg_asymp}
\mathrm{Wg}_D([\id])=D^{-p}+O(D^{-p-2}),
\qquad
\mathrm{Wg}_D([\pi]\neq [\id])=O(D^{-p-1}),
\end{equation}
so the identity class dominates. Plugging \eqref{eq:Wg_asymp} into \eqref{eq:haar_average}
 yields the leading rule
\begin{equation}
\label{eq:leading_rule}
\int dU\ \prod_{a=1}^p U_{i_a j_a}\ \prod_{a=1}^p U^{*}_{i'_a j'_a}
=
D^{-p}\sum_{\sigma\in S_p}
\delta_\sigma(\mathbf i,\mathbf i')\,
\delta_\sigma(\mathbf j,\mathbf j')
\;+\;O(D^{-p-1}).
\end{equation}
Equation \eqref{eq:leading_rule} is the only Haar input used below. Throughout, we take $k$ fixed while $D\to\infty$.

\subsection{Appendix B: Proof of the 2SP in Theorem 1 (leading large-$D$)}
\label{sec:thm1}

\begin{proof}
    The quantity of 2SP in Theorem 1 reads
\begin{equation}
\label{eq:F1_def}
    \begin{split}
        F^{(k)}_{\nu}&= 
\sum_{\pi,\sigma \in S_k}\sum_{\{m_a\},\{n_a\}=1}^D   \left[\prod_{a=1}^k \Bigl|\langle E_{m_a}|\epsilon_{n_a}\rangle \Bigr|^2\,\Bigl|\langle E_{m_{\pi(a)}}|\epsilon_{n_{\sigma(a)}}\rangle \Bigr|^2\right] \\
 &= \sum_{\pi,\sigma \in S_k}\sum_{\{m_a\},\{n_a\}=1}^D   \left[\prod_{a=1}^k |U_{m_a n_a}|^2|U_{m_{\pi(a)} n_{\sigma(a)}}|^2\right]\\
    \end{split}
\end{equation}
with $U\in U(D)$ Haar-random.

To perform the Haar random calculus, we can write down the index $(\mathbf i,\mathbf j,\mathbf i',\mathbf j')$.
For fixed outer permutations $(\pi,\sigma)\in S_k\times S_k$, we have
\begin{equation}
\prod_{a=1}^k |U_{m_a n_a}|^2\,|U_{m_{\pi(a)} n_{\sigma(a)}}|^2
=
\prod_{r=1}^{2k} U_{i_r j_r}\ \prod_{r=1}^{2k} U^*_{i'_r j'_r},
\end{equation}
with $p=2k$ and the ordered sequences
\begin{equation}
\label{eq:thm1_indices}
\mathbf i=(m_1,\dots,m_k,\ m_{\pi(1)},\dots,m_{\pi(k)}),\qquad
\mathbf i'=\mathbf i,
\end{equation}
\begin{equation}
\label{eq:thm1_indices_j}
\mathbf j=(n_1,\dots,n_k,\ n_{\sigma(1)},\dots,n_{\sigma(k)}),\qquad
\mathbf j'=\mathbf j.
\end{equation}

Applying \eqref{eq:leading_rule} with $p=2k$ gives, for each fixed $(\pi,\sigma)$,
\begin{equation}
\label{eq:thm1_leading_integrand}
\mathbb E_U\!\left[\prod_{a=1}^k |U_{m_a n_a}|^2\,|U_{m_{\pi(a)} n_{\sigma(a)}}|^2\right]
=
D^{-2k}\sum_{\alpha\in S_{2k}}
\delta_\alpha(\mathbf i,\mathbf i)\,
\delta_\alpha(\mathbf j,\mathbf j)
\;+\;O(D^{-2k-1}).
\end{equation}

The next step is to analyze how to perform the index summation for all the $\alpha$.
In $\mathbf i$ each label $m_r$ appears \emph{exactly twice}, at positions
\begin{equation}
P^{(m)}_r=\{\,r,\ k+\pi^{-1}(r)\,\},\qquad r=1,\dots,k.
\end{equation}
If $\alpha$ maps any position out of its pair $P^{(m)}_r$, then
$\delta_\alpha(\mathbf i,\mathbf i)$ forces an identification $m_r=m_{r'}$ with $r\neq r'$,
reducing the number of free $m$-sums by at least one and hence suppressing the contribution
by at least $O(1/D)$ after the prefactor $D^{-2k}$. Thus, at leading order, $\alpha$ must preserve every
pair $P^{(m)}_r$, i.e.\ $\alpha$ belongs to the size-$2^k$ subgroup
$G_m(\pi)\cong(\mathbb Z_2)^k$ generated by independent swaps within each $P^{(m)}_r$.

Similarly, in $\mathbf j$ each $n_r$ appears twice at
\begin{equation}
P^{(n)}_r=\{\,r,\ k+\sigma^{-1}(r)\,\},\qquad r=1,\dots,k,
\end{equation}
so leading contributions require $\alpha\in G_n(\sigma)\cong(\mathbb Z_2)^k$.

Therefore,
\begin{equation}
\sum_{\{m\},\{n\}}\delta_\alpha(\mathbf i,\mathbf i)\delta_\alpha(\mathbf j,\mathbf j)
=
\begin{cases}
D^{2k} & \alpha\in G_m(\pi)\cap G_n(\sigma),\\
O(D^{2k-1}) & \text{otherwise},
\end{cases}
\end{equation}
and \eqref{eq:thm1_leading_integrand} implies
\begin{equation}
\label{eq:thm1_pair_intersection}
\sum_{\{m\},\{n\}}
\mathbb E_U\!\left[\prod_{a=1}^k |U_{m_a n_a}|^2\,|U_{m_{\pi(a)} n_{\sigma(a)}}|^2\right]
=
|G_m(\pi)\cap G_n(\sigma)|\;+\;O(1/D).
\end{equation}

To compute $|G_m(\pi)\cap G_n(\sigma)|$, the goal is to find how many indexes share same permutation result for both $\pi$ and $\sigma$. Specifically,
we define the relative permutation $\rho=\pi^{-1}\sigma\in S_k$.
The $m$-pair $P^{(m)}_r$ equals the $n$-pair $P^{(n)}_r$ iff
$\pi^{-1}(r)=\sigma^{-1}(r)$, i.e.\ $\rho(r)=r$.
For each fixed point of $\rho$ one has an independent binary choice (swap or not) that preserves both
pairings, while for $r$ not a fixed point the choice is only non-swap.
Hence
\begin{equation}
\label{eq:intersection_size}
|G_m(\pi)\cap G_n(\sigma)|=2^{\mathrm{fix}(\rho)},\qquad \rho=\pi^{-1}\sigma.
\end{equation}
For example, for $\sigma(m_1 m_2 m_3)=m_3 m_1 m_2$, $\tau(m_1 m_2 m_3)=m_2 m_1 m_3$, there is one fix point is corresponding to $m_1$ but $m_2, m_3$ is not the fix point. 

Finally we sum over outer permutations, and
combine \eqref{eq:F1_def}, \eqref{eq:thm1_pair_intersection}, and \eqref{eq:intersection_size},
\begin{equation}
\mathbb E_U F^{(k)}_\nu
=
\sum_{\pi,\sigma\in S_k} 2^{\mathrm{fix}(\pi^{-1}\sigma)} + O(1/D).
\end{equation}
For each fixed $\pi$, the map $\sigma\mapsto \rho=\pi^{-1}\sigma$ is a bijection of $S_k$, giving
\begin{equation}
\label{eq:thm1_final}
\lim_{D\to\infty}\mathbb E_U F^{(k)}_\nu
=
k!\sum_{\rho\in S_k} 2^{\mathrm{fix}(\rho)} = k!\sum_{j=0}^k \binom{k}{j}\ !(k-j)2^j,
\end{equation}
where $!(n)$ is the derangement number, where the number of permutations of 
$n$ elements with no fixed points. It has the iteration expression $!n = (n-1)(!(n-1)+!(n-2))$. The formula \eqref{eq:thm1_final} means you choose $j$ indexes from $k$ indexes, and these $j$ indexes are fixed points and therefore contribute to $2^j$, and the $(k-j)$ indexes must not be fixed point and contribute $!(k-j)$. Therefore,
this is the leading large-$D$ result stated in the main text.
\end{proof}

\subsubsection{An Example in Theorem 1}

We consider an example of $k=3$ to illustrate the idea in the proof.
We illustrate the pairing/intersection rule leading to
$|G_m(\pi)\cap G_n(\sigma)|=2^{N_{\rm match}}$ for a concrete choice of
outer permutations.  Take
\begin{equation}
\pi(m_1m_2m_3)=(m_2m_1m_3)\quad\text{with}\  \pi=(12),
\qquad
\sigma(m_1m_2m_3)=(m_3m_1m_2)\quad\text{with}\ \sigma=(132).
\end{equation}
For $k=3$ we have $p=2k=6$ and the ordered index sequences (with positions $1,\dots,6$) are
\begin{equation}
\label{eq:SM_i_j_sequences}
\mathbf i=(m_1,m_2,m_3,m_{\pi(1)},m_{\pi(2)},m_{\pi(3)})=(m_1,m_2,m_3,m_2,m_1,m_3),
\end{equation}
\begin{equation}
\mathbf j=(n_1,n_2,n_3,n_{\sigma(1)},n_{\sigma(2)},n_{\sigma(3)})=(n_1,n_2,n_3,n_3,n_1,n_2).
\end{equation}

Now we consider the pair structure. Each label appears exactly twice. The $m$-pairs (fixed by $\pi$) are $P^{(m)}_r=\{r,\ k+\pi^{-1}(r)\}$ with $r=1,2,3$,
which gives
\begin{equation}
P^{(m)}_1=\{1,5\},\qquad P^{(m)}_2=\{2,4\},\qquad P^{(m)}_3=\{3,6\}.
\end{equation}
Similarly, the $n$-pairs (fixed by $\sigma$) are
\begin{equation}
P^{(n)}_r=\{\,r,\ k+\sigma^{-1}(r)\,\},\qquad r=1,2,3,
\end{equation}
which gives
\begin{equation}
P^{(n)}_1=\{1,5\},\qquad P^{(n)}_2=\{2,6\},\qquad P^{(n)}_3=\{3,4\}.
\end{equation}

At leading order in large $D$, the Weingarten-leading term enforces that the
permutation $\alpha\in S_{2k}=S_6$ must preserve simultaneously the $m$-pairing
and the $n$-pairing.  In this example, the \emph{only} common pair is
\begin{equation}
P^{(m)}_1=P^{(n)}_1=\{1,5\},
\end{equation}
while the remaining pairs conflict. Hence
\begin{equation}
G_m(\pi)\cap G_n(\sigma)=\{\id,\,(1\,5)\}
\end{equation}
and $
\bigl|G_m(\pi)\cap G_n(\sigma)\bigr|=2$. So the only leading-order $\alpha$ are the identity and the swap within the shared pair,
$\alpha\in\{(),(1\,5)\}$ in this case of $\pi$ and $\sigma$.

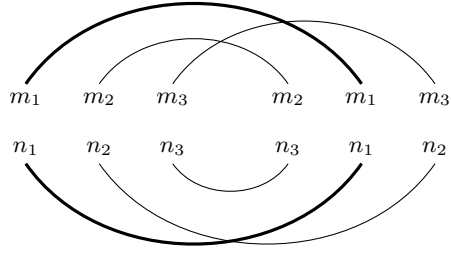
\begin{figure}[t]
\centering
\begin{tikzpicture}[baseline=(M-1-1.base)]
\matrix (M) [matrix of math nodes, row sep=2.0ex, column sep=2.4ex] {
m_1 & m_2 & m_3 & \vline & m_2 & m_1 & m_3 \\
n_1 & n_2 & n_3 & \vline & n_3 & n_1 & n_2 \\
};


\draw (M-1-1.north) to[bend left=55] (M-1-6.north);
\draw (M-1-2.north) to[bend left=55] (M-1-5.north);
\draw (M-1-3.north) to[bend left=55] (M-1-7.north);

\draw (M-2-1.south) to[bend right=55] (M-2-6.south);
\draw (M-2-2.south) to[bend right=55] (M-2-7.south);
\draw (M-2-3.south) to[bend right=55] (M-2-5.south);

\draw[very thick] (M-1-1.north) to[bend left=55] (M-1-6.north);
\draw[very thick] (M-2-1.south) to[bend right=55] (M-2-6.south);
\end{tikzpicture}
\caption{Pairing lines for the example \eqref{eq:SM_i_j_sequences}.
Top row: $m$-pairs fixed by $\pi$; bottom row: $n$-pairs fixed by $\sigma$.
The only shared pair is $\{1,5\}$ (thick), giving two allowed $\alpha\in\{(),(1\,5)\}$
and thus $\bigl|G_m(\pi)\cap G_n(\sigma)\bigr|=2$.}
\label{fig:SM_pairing_k3}
\end{figure}

\subsection{Proof of 3SP in Theorem 1 (leading large-$D$)}
\label{sec:thm2}

\begin{proof}
    We now consider two independent Haar unitaries $U^{(1)},U^{(2)}\in U(D)$ and
\begin{equation}
\label{eq:F2_def}
\begin{split}
    F^{(k)}_{\nu}
&= \sum_{\pi,\sigma,\tau,\upsilon \in S_k}\sum_{\{\text{index}\}=1}^D   \prod_{a=1}^k
\Bigl(
\langle E_{m_a}|\epsilon_{p_a}\rangle\,
\langle \epsilon_{p_a}|\eta_{g_a}\rangle\,
\langle \eta_{g_a}|\epsilon_{f_a}\rangle\,
\langle\epsilon_{f_a}|E_{m_a}\rangle
\Bigr)\Bigl(
\langle E_{m_{\pi(a)}}|\epsilon_{p_{\sigma(a)}}\rangle\,
\langle \epsilon_{p_{\sigma(a)}}|\eta_{g_{\tau(a)}}\rangle\,
\langle \eta_{g_{\tau(a)}}|\epsilon_{f_{\upsilon(a)}}\rangle\,
\langle\epsilon_{f_{\upsilon(a)}}|E_{m_{\pi(a)}}\rangle
\Bigr)^{\!*} \\\\
&=
\sum_{\pi,\sigma,\tau,\upsilon\in S_k}\ \sum_{\{m,p,f,g\}=1}^D
\prod_{a=1}^k
\Bigl(U^{(1)}_{m_a p_a}\,U^{(2)}_{p_a g_a}\,U^{*(2)}_{f_a g_a}\,U^{*(1)}_{m_a f_a}\Bigr)
\Bigl(U^{(1)}_{m_{\pi(a)} p_{\sigma(a)}}\,U^{(2)}_{p_{\sigma(a)} g_{\tau(a)}}\,U^{*(2)}_{f_{\upsilon(a)} g_{\tau(a)}}\,U^{*(1)}_{m_{\pi(a)} f_{\upsilon(a)}}\Bigr)^{*},
\end{split}
\end{equation}

Because $U^{(1)}$ and $U^{(2)}$ are independent Haar, the average factorizes:
\begin{equation}
\mathbb E[F^{(k)}_\nu]
=
\sum_{\pi,\sigma,\tau,\upsilon}\ \sum_{\{m,p,f,g\}}
\mathbb E_{U^{(1)}}[\dots]\ \mathbb E_{U^{(2)}}[\dots].
\end{equation}

\paragraph{Averaging $U^{(1)}$ at leading order.}

Collect the $U^{(1)}$ factors (there are $2k$ unstarred and $2k$ starred). With $p=2k$, choose
\begin{equation}
\label{eq:U1_indices}
\mathbf i^{(1)}=(m_1,\dots,m_k,\ m_{\pi(1)},\dots,m_{\pi(k)}),\qquad
\mathbf i^{\prime(1)}=\mathbf i^{(1)},
\end{equation}
\begin{equation}
\label{eq:U1_indices_j}
\mathbf j^{(1)}=(p_1,\dots,p_k,\ f_{\upsilon(1)},\dots,f_{\upsilon(k)}),\qquad
\mathbf j^{\prime(1)}=(f_1,\dots,f_k,\ p_{\sigma(1)},\dots,p_{\sigma(k)}).
\end{equation}
Applying \eqref{eq:leading_rule} gives
\begin{equation}
\label{eq:U1_leading}
\mathbb E_{U^{(1)}}[\dots]
=
D^{-2k}\sum_{\alpha\in S_{2k}}
\delta_\alpha(\mathbf i^{(1)},\mathbf i^{(1)})\,
\delta_\alpha(\mathbf j^{(1)},\mathbf j^{\prime(1)})
+O(D^{-2k-1}).
\end{equation}

As in Theorem~1, avoiding $m$-identifications forces $\alpha$ to preserve the $m$-pairs
$P^{(m)}_r=\{r,k+\pi^{-1}(r)\}$. Among these pair-preserving $\alpha$'s, any unswapped pair produces
constraints of the form $p=\!f$ from $\delta_\alpha(\mathbf j^{(1)},\mathbf j^{\prime(1)})$, which lowers the
number of free $(p,f)$-sums and is subleading. Hence the only leading contribution is the \emph{full swap}
$\alpha=\alpha_{\rm sw}$ that swaps every $m$-pair corresponding to $P_r^{(m)}$. Evaluating $\delta_{\alpha_{\rm sw}}(\mathbf j^{(1)},\mathbf j^{\prime(1)})$
then imposes
\begin{equation}
\label{eq:U1_constraints}
\sigma=\pi,\qquad \upsilon=\pi,
\end{equation}
at leading order; all other $(\sigma,\upsilon)$ are suppressed by $O(1/D)$.

\paragraph{Averaging $U^{(2)}$ at leading order}
With the constraints \eqref{eq:U1_constraints}, collect the $U^{(2)}$ factors and choose
\begin{equation}
\label{eq:U2_indices}
\mathbf i^{(2)}=(p_1,\dots,p_k,\ f_{\pi(1)},\dots,f_{\pi(k)}),\qquad
\mathbf i^{\prime(2)}=(f_1,\dots,f_k,\ p_{\pi(1)},\dots,p_{\pi(k)}),
\end{equation}
\begin{equation}
\label{eq:U2_indices_j}
\mathbf j^{(2)}=(g_1,\dots,g_k,\ g_{\tau(1)},\dots,g_{\tau(k)}),\qquad
\mathbf j^{\prime(2)}=\mathbf j^{(2)}.
\end{equation}
Applying \eqref{eq:leading_rule} again gives
\begin{equation}
\label{eq:U2_leading}
\mathbb E_{U^{(2)}}[\dots]
=
D^{-2k}\sum_{\beta\in S_{2k}}
\delta_\beta(\mathbf i^{(2)},\mathbf i^{\prime(2)})\,
\delta_\beta(\mathbf j^{(2)},\mathbf j^{(2)})
+O(D^{-2k-1}).
\end{equation}

Now $\delta_\beta(\mathbf j^{(2)},\mathbf j^{(2)})$ enforces pair-preservation with respect to the $g$-pairs
$P^{(g)}_r=\{r,k+\tau^{-1}(r)\}$. As above, leading order requires the full swap on these pairs, and
$\delta_{\beta_{\rm sw}}(\mathbf i^{(2)},\mathbf i^{\prime(2)})$ then fixes
\begin{equation}
\label{eq:U2_constraints}
\tau=\pi.
\end{equation}

Based on the integration of $U^{(\bm{1})}$ and $U^{(\bm{2})}$, the quations \eqref{eq:U1_constraints} and \eqref{eq:U2_constraints} show that, at leading order, only the diagonal choice
\begin{equation}
\label{eq:diag_choice}
\sigma=\tau=\upsilon=\pi
\end{equation}
contributes. For such terms, the remaining index sums provide $D^{4k}$ free choices of $(m,p,f,g)$, while the two leading
Weingarten factors contribute $D^{-2k}\cdot D^{-2k}=D^{-4k}$, giving an $O(1)$ contribution equal to $1$ per $\pi$.
Therefore
\begin{equation}
\label{eq:thm2_final}
    \lim_{D\to\infty}\mathbb E_{U^{(1)},U^{(2)}} F^{(k)}_\nu
    =\sum_{\pi\in S_k} 1
    =k!.
\end{equation}
All other outer-permutation choices are suppressed by at least one power of $1/D$.
\end{proof}

\subsubsection{An example of 3SP in Theorem 1}
We illustrate the leading large-$D$ mechanism behind the diagonal constraint $\sigma=\tau=\upsilon=\pi$ for a concrete $k=3$ choice. Let $\pi(123)=(213)$ and $\pi=(12)$.

For $k=3$ we have $p=2k=6$. The $U^{(1)}$ indices used in
Eqs.~(\ref{eq:U1_indices})--(\ref{eq:U1_indices_j}) become
\begin{equation}
\mathbf i^{(1)}=(m_1,m_2,m_3,m_{\pi(1)},m_{\pi(2)},m_{\pi(3)})
=(m_1,m_2,m_3,m_2,m_1,m_3),
\end{equation}
\begin{equation}
    \mathbf j^{(1)}=(p_1,p_2,p_3, f_{\upsilon(1)},f_{\upsilon(2)},f_{\upsilon(3)}),\qquad
    \mathbf j^{\prime(1)}=(f_1,f_2,f_3, p_{\sigma(1)},p_{\sigma(2)},p_{\sigma(3)}).
\end{equation}

The sequence $\mathbf i^{(1)}$ contains each $m_r$ twice, at the pair positions
\begin{equation}
P^{(m)}_1=\{1,5\},\qquad P^{(m)}_2=\{2,4\},\qquad P^{(m)}_3=\{3,6\}.
\end{equation}
Thus $\delta_\alpha(\mathbf i^{(1)},\mathbf i^{(1)})$ is leading only if
$\alpha$ preserves these pairs. Among such pair-preserving $\alpha$'s, any \emph{unswapped}
pair forces an identification $p=\!f$ in $\delta_\alpha(\mathbf j^{(1)},\mathbf j^{\prime(1)})$.
For instance, if $\alpha$ fixes position $1$ instead of swapping $1\leftrightarrow 5$, then
$\delta_{j^{(1)}_{\alpha(1)},j^{\prime(1)}_{1}}=\delta_{p_1,f_1}$ reduces the number of free
$(p,f)$ sums by one and hence is suppressed by $O(1/D)$ after the $D^{-6}$ prefactor.
Therefore the only leading contribution is the \emph{full swap}
\begin{equation}
\alpha_{\rm sw}=(1\,5)(2\,4)(3\,6).
\end{equation}

We can directly check the delta function constraint, with $\delta_\alpha(\mathbf j,\mathbf j')=\prod_{s=1}^{6}\delta_{j_{\alpha(s)},j'_s}$. For the First half ($s=1,2,3$): since $\alpha_{\rm sw}(1)=5$, $\alpha_{\rm sw}(2)=4$, $\alpha_{\rm sw}(3)=6$,
\begin{equation}
\delta_{j_5,f_1}=\delta_{f_{\upsilon(2)},f_1},\quad
\delta_{j_4,f_2}=\delta_{f_{\upsilon(1)},f_2},\quad
\delta_{j_6,f_3}=\delta_{f_{\upsilon(3)},f_3},
\end{equation}
which at leading order (no $f$-identifications) implies $\upsilon(2)=1,\ \upsilon(1)=2,\ \upsilon(3)=3$, i.e.
$\upsilon=\pi$.

For the second halpf, we write $s=3+q$ with $q=1,2,3$, one has
$\alpha_{\rm sw}(3+q)=\pi(q)$ for the full swap on $P_r^{(m)}$, hence
\begin{equation}
\delta_{j_{\pi(1)},p_{\sigma(1)}}=\delta_{p_2,p_{\sigma(1)}},\quad
\delta_{j_{\pi(2)},p_{\sigma(2)}}=\delta_{p_1,p_{\sigma(2)}},\quad
\delta_{j_{\pi(3)},p_{\sigma(3)}}=\delta_{p_3,p_{\sigma(3)}},
\end{equation}
which at leading order (no $p$-identifications) forces $\sigma(1)=2,\ \sigma(2)=1,\ \sigma(3)=3$, i.e. $\sigma=\pi$.

Hence $U^{(1)}$-averaging enforces $\sigma=\upsilon=\pi$ at $O(1)$.

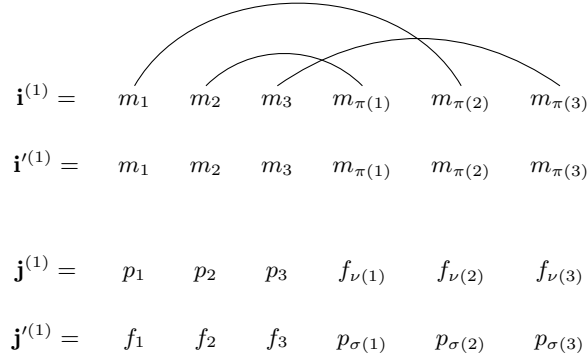
\begin{figure}[t]
\centering
\begin{tikzpicture}[baseline=(M-1-2.base)]
\matrix (M) [matrix of math nodes, row sep=2.1ex, column sep=2.2ex] {
\mathbf i^{(1)}       = & m_1 & m_2 & m_3 & m_{\pi(1)} & m_{\pi(2)} & m_{\pi(3)} \\
\mathbf i^{\prime(1)} = & m_1 & m_2 & m_3 & m_{\pi(1)} & m_{\pi(2)} & m_{\pi(3)} \\
& {} & {} & {} & {} & {} & {} \\
\mathbf j^{(1)}       = & p_1 & p_2 & p_3 & f_{\nu(1)} & f_{\nu(2)} & f_{\nu(3)} \\
\mathbf j^{\prime(1)} = & f_1 & f_2 & f_3 & p_{\sigma(1)} & p_{\sigma(2)} & p_{\sigma(3)} \\
};

\draw (M-1-2.north) to[bend left=60] (M-1-6.north); 
\draw (M-1-3.north) to[bend left=45] (M-1-5.north); 
\draw (M-1-4.north) to[bend left=35] (M-1-7.north); 
\end{tikzpicture}
\caption{Leading $U^{(1)}$ contraction for $k=3$ and $\pi=(12)$.
The full-swap $\alpha_{\rm sw}=(1\,5)(2\,4)(3\,6)$ pairs the second-half $f$'s with the first-half $f$'s
and the first-half $p$'s with the second-half $p$'s, forcing $\upsilon=\sigma=\pi$ at leading order.}
\label{fig:SM_thm2_U1_k3}
\end{figure}

Similarly, averaging $U^{(2)}$ also forces $\tau=\pi$
Therefore for $k=3$ and $\pi=(12)$, the leading term therefore requires the diagonal choice
$\sigma=\tau=\upsilon=\pi$, and the net contribution of this $\pi$ is $1$. Summing over all $\pi\in S_3$ yields
$\lim_{D\to\infty}\mathbb E[F^{(3)}_\nu]=|S_3|=3!=6$, in agreement with Eq.~\eqref{eq:thm2_final}.

\section{Appendix B: Finite-$T$ corrections from imperfect time filtering}
\label{app:finiteT_filter_leakage}

The long-time arguments in the main text rely on the fact that the time average suppresses
off-diagonal energy differences.  At finite $T$ this suppression is imperfect, and one should
track the resulting ``leakage'' quantitatively.

Throughout we use
\begin{equation}
I_T(\Delta E)\equiv
\int_0^T\!dt\int_0^T\!dt'\,P(t)P(t')\,e^{i\Delta E (t-t')}
=
\left|\int_0^T\!dt\,P(t)\,e^{i\Delta E t}\right|^2,
\label{eq:IT_def_app}
\end{equation}
which obeys the basic properties
\begin{equation}
0\le I_T(\Delta E)\le 1,\qquad I_T(0)=1,
\qquad I_T(\Delta E)\ \text{has width}\ \Delta E\sim 1/T.
\label{eq:IT_basic_props}
\end{equation}
For the normalized uniform distribution $P(t)=\frac{1}{T}\mathbf{1}_{[0,T]}(t)$,
\begin{equation}
I_T(\Delta E)
=\left(\frac{\sin(\Delta E T/2)}{\Delta E T/2}\right)^2.
\label{eq:IT_sinc2_app}
\end{equation}
In a discrete, nondegenerate spectrum, the limit $T\to\infty$ suppresses $m\neq m'$ contributions
because typical $\Delta E_{mm'}\neq 0$ implies $I_T(\Delta E_{mm'})\to 0$.
Crucially, however, at any finite $T$ there is a positive leakage $I_T(\Delta E_{mm'})>0$ for $m\neq m'$,
and the size of the correction is a \emph{weighted sum over \underline{all} off-diagonal pairs}---not only
nearest neighbors in $m$.

\subsection{2SP: structure of the leakage}
\label{app:finiteT_two_step}

Starting from Eq.(4) in the main text,
\begin{equation}
F^{(1)}_{\nu}(T)
=
\sum_{m,n,m',n'=1}^D
|U_{mn}|^2\,|U_{m'n'}|^2\;
I_T(E_m-E_{m'})\;
I_T(\epsilon_n-\epsilon_{n'}),
\qquad U_{mn}\equiv \langle E_m|\epsilon_n\rangle.
\label{eq:F1_general_app}
\end{equation}
It is convenient to separate diagonal and off-diagonal parts of the filter:
\begin{equation}
I_T(E_m-E_{m'})=\delta_{mm'}+\widetilde I^{(1)}_{mm'}(T),
\qquad
\widetilde I^{(1)}_{mm'}(T)\equiv (1-\delta_{mm'})\,I_T(E_m-E_{m'}),
\label{eq:IT_split_H1}
\end{equation}
and similarly
\begin{equation}
I_T(\epsilon_n-\epsilon_{n'})=\delta_{nn'}+\widetilde I^{(2)}_{nn'}(T),
\qquad
\widetilde I^{(2)}_{nn'}(T)\equiv (1-\delta_{nn'})\,I_T(\epsilon_n-\epsilon_{n'}).
\label{eq:IT_split_H2}
\end{equation}
Plugging \eqref{eq:IT_split_H1}--\eqref{eq:IT_split_H2} into \eqref{eq:F1_general_app} yields the exact decomposition
\begin{equation}
F^{(1)}_{\nu}(T)=F^{(1)}_{\nu}(\infty)+\Delta F^{(1)}_{(1)}(T)+\Delta F^{(1)}_{(2)}(T)+\Delta F^{(1)}_{(12)}(T),
\label{eq:F1_decomp_app}
\end{equation}
where the long-time limit is the IPR
\begin{equation}
F^{(1)}_{\nu}(\infty)=\sum_{m,n=1}^D |U_{mn}|^4,
\label{eq:F1_IPR_app}
\end{equation}
and the three leakage pieces are
\begin{align}
\Delta F^{(1)}_{(1)}(T)
&=
\sum_{\substack{m\neq m'\\ n,n'}}
|U_{mn}|^2\,|U_{m'n'}|^2\;
\widetilde I^{(1)}_{mm'}(T)\;
\delta_{nn'},
\label{eq:DeltaF1_H1_only_app}\\
\Delta F^{(1)}_{(2)}(T)
&=
\sum_{\substack{n\neq n'\\ m,m'}}
|U_{mn}|^2\,|U_{m'n'}|^2\;
\delta_{mm'}\;
\widetilde I^{(2)}_{nn'}(T),
\label{eq:DeltaF1_H2_only_app}\\
\Delta F^{(1)}_{(12)}(T)
&=
\sum_{\substack{m\neq m'\\ n\neq n'}}
|U_{mn}|^2\,|U_{m'n'}|^2\;
\widetilde I^{(1)}_{mm'}(T)\;
\widetilde I^{(2)}_{nn'}(T).
\label{eq:DeltaF1_both_app}
\end{align}
Because all terms are nonnegative, \emph{any} finite leakage increases $F^{(1)}_{\nu}$ above its long-time value.
This is the key qualitative point: for the 2SP the finite-$T$ correction is positive and appears already
at first order in the off-diagonal weight.

A compact way to parameterize the leakage is via the averaged off-diagonal weights
\begin{equation}
\varepsilon_{H}(T)\equiv \frac{1}{N_H(N_H-1)}\sum_{a\neq b} I_T(E_a-E_b),
\label{eq:eps_def_app}
\end{equation}
where $a,b$ label distinct spectral blocks and $N_H$ is their number.  Thus $N_H=D$ for a nondegenerate GOE or GUE spectrum.  For the GSE, $a,b$ label Kramers doublets and $N_H=D/2$: the two states inside a doublet belong to the perfect-filter projector and their zero energy difference is not counted as leakage.  When two Hamiltonians occur below, their leakage parameters are denoted by $\varepsilon_{H_1}$ and $\varepsilon_{H_2}$.
These numbers are $\varepsilon_{H_j}(T)\to 0$ as $T\to\infty$, but at finite $T$ they can be substantial unless
$T$ is parametrically larger than the inverse many-body level spacing (Heisenberg time).

For a nondegenerate spectrum, if one models the change-of-basis as perfectly flat,
\begin{equation}
|U_{mn}|^2=\frac{1}{D}\qquad (\text{idealized random-eigenbasis limit}),
\label{eq:flatU_app}
\end{equation}
then \eqref{eq:F1_general_app} factorizes and one gets an explicit closed form:
\begin{equation}
F^{(1)}_{\nu}(T)
=
\frac{K_{H_1}(T)\,K_{H_2}(T)}{D^2},
\qquad
K_{H}(T)\equiv \sum_{a,b=1}^D I_T(E_a-E_b).
\label{eq:F1_flatU_factor_app}
\end{equation}
Using $I_T(0)=1$ and \eqref{eq:eps_def_app},
\begin{equation}
K_{H}(T)=D + D(D-1)\,\varepsilon_{H}(T),
\label{eq:K_eps_app}
\end{equation}
so
\begin{equation}
F^{(1)}_{\nu}(T)
=
\Bigl[1+(D-1)\varepsilon_{H}(T)\Bigr]\,
\Bigl[1+(D-1)\varepsilon_{H_2}(T)\Bigr].
\label{eq:F1_flatU_eps_app}
\end{equation}
In particular, relative to the long-time value $F^{(1)}_{\nu}(\infty)=1$ in the flat-overlap model,
\begin{equation}
F^{(1)}_{\nu}(T)-1
=(D-1)\bigl[\varepsilon_{H}(T)+\varepsilon_{H_2}(T)\bigr]
+(D-1)^2\varepsilon_{H}(T)\varepsilon_{H_2}(T).
\label{eq:F1_error_scaling_app}
\end{equation}
This makes the main correction mechanism transparent: the error is controlled by the \emph{aggregate} off-diagonal leakage
$\varepsilon_H(T)$, not by a single ``nearest-neighbor'' spacing.
For the uniform $P(t)$ in \eqref{eq:IT_sinc2_app}, one has
$I_T(\Delta E)\lesssim (\Delta E T)^{-2}$ for $|\Delta E|T\gg 1$.  If the relevant nonzero block gaps are uniformly of order $\Delta E$, this gives $\varepsilon_H(T)=O((T\Delta E)^{-2})$.  In general, however, $\varepsilon_H(T)$ is the spectral average in Eq.~\eqref{eq:eps_def_app}, rather than a function of one representative or nearest-neighbor gap.

\subsection{3SP: suppression of a single leakage channel}
\label{app:finiteT_three_step}

For the 3SP, the long-time ($T\to\infty$) Haar-compatible result comes from the fact that the only
surviving terms are those in which \emph{every complex overlap amplitude is paired with its complex conjugate}
in a way consistent with the cyclic ordering around
\begin{equation}
\langle E_m|\epsilon_p\rangle\,
\langle \epsilon_p|\eta_g\rangle\,
\langle \eta_g|\epsilon_f\rangle\,
\langle \epsilon_f|E_m\rangle.
\label{eq:cycle_amp_app}
\end{equation}
At finite $T$ the time filters broaden the energy constraints and, in principle, allow additional index patterns.
The important distinction from the 2SP case is that the 3SP summand is \emph{not} a function of
$|U|^2$ only: it carries \emph{complex phases}. As a result, leakage contributions that do not implement a
full conjugate pairing acquire random phases and cancel.

A clean way to formalize this is to define the single-cycle amplitude
\begin{equation}
A_{m p g f}\equiv
\langle E_m|\epsilon_p\rangle\,
\langle \epsilon_p|\eta_g\rangle\,
\langle \eta_g|\epsilon_f\rangle\,
\langle \epsilon_f|E_m\rangle,
\label{eq:A_def_app}
\end{equation}
so that the $k$th FP is a sum over products of $A$'s and $A^*$'s multiplied by time-filter factors enforcing
(additive) energy matching conditions.

\paragraph{Random-eigenbasis (Haar) estimate.}
In chaotic systems, it is standard to model the relative eigenbases of independent Hamiltonians as Haar-random.
In that model, any term in the FP expansion that contains an overlap amplitude without a matching conjugate
has vanishing ensemble average:
\begin{equation}
\mathbb{E}_{\text{Haar}}\!\bigl[\,(\dots U_{ab}\dots)\,(\dots U_{a'b'}\dots)^{*}\bigr]=0
\quad
\text{unless the indices contract to pair each $U$ with a $U^*$}.
\label{eq:Haar_pairing_rule_app}
\end{equation}
A single leakage relaxes one additive energy constraint and breaks the full-swap contraction that gives the perfect-filter result.  It does not, however, vanish identically: the leading nonzero contraction uses a non-swap on the leaked pair.  Relative to the 2SP, this contraction also identifies one pair of intermediate indices and thereby removes the extra free sum produced by the leakage.  The detailed counting below consequently gives an $O(\varepsilon_H)$ correction for the 3SP, rather than the $O(D\varepsilon_H)$ correction of the 2SP.

\section{Appendix C: Proof of the finite-time filter theorem (Theorem 2)}
\label{app:finiteT_theorem}

\subsection{Two-step protocol}

\begin{proof}
For simplicity, we only evaluate the contribution $\Delta F^{(k)}_{(1)}$.
\begin{equation}
\label{eq:DeltaF1_def}
\begin{split}
\Delta F^{(k)}_{(1)}
&= \sum_{\pi,\sigma \in S_k}\sum_{b=1}^k \sum_{\{m_a\},\{n_a\}=1}^D  \sum_{\tilde{m}\neq m_{\pi(b)}} \left[\prod_{a=1}^k |U_{m_a n_a}|^2\right]
\left[\prod_{a\neq b} |U_{m_{\pi(a)} n_{\sigma(a)}}|^2\right]
|U_{\tilde{m}\, n_{\sigma(b)}}|^2\,
I_T(E_{m_{\pi(b)}} - E_{\tilde{m}}).
\end{split}
\end{equation}
Here $U\in U(D)$ is Haar-random. We introduce the ``partial permutation'' $\pi_b$ by
\begin{equation}
\label{eq:partial_permutation}
\pi_b(m_1 m_2 \dots m_k)
=
m_{\pi(1)} m_{\pi(2)} \dots m_{\pi(b-1)} \,\tilde{m}\, m_{\pi(b+1)} \dots m_{\pi(k)}.
\end{equation}
Physically, this encodes imperfect energy filtering: the $b$-th replica pairing on the $m$-indices is broken by a leakage $m_{\pi(b)}\to \tilde m$ (for the leading contribution one typically has $|\tilde m-m_{\pi(b)}|\approx 1$).

As before, we package indices as $(\mathbf i,\mathbf j,\mathbf i',\mathbf j')$.
For fixed $(\pi,\sigma)\in S_k\times S_k$, we can write
\[
\prod_{a=1}^k |U_{m_a n_a}|^2\,|U_{m_{\pi(a)} n_{\sigma(a)}}|^2
=
\prod_{r=1}^{2k} U_{i_r j_r}\ \prod_{r=1}^{2k} U^*_{i'_r j'_r},
\]
with $p=2k$ and the ordered sequences
\begin{equation}
\label{eq:thm3_indices_i}
\mathbf i=(m_1,\dots,m_k,\ m_{\pi(1)},\dots,m_{\pi(b-1)},\tilde{m},m_{\pi(b+1)},\dots,m_{\pi(k)}),\qquad
\mathbf i'=\mathbf i,
\end{equation}
\begin{equation}
\label{eq:thm3_indices_j}
\mathbf j=(n_1,\dots,n_k,\ n_{\sigma(1)},\dots,n_{\sigma(k)}),\qquad
\mathbf j'=\mathbf j.
\end{equation}

Using the Weingarten expansion, we obtain
\begin{equation}
\Delta F_{(1)}^{(k)} = \sum_{\alpha\in S_{2k}}\mathrm{Wg}_D(\alpha)\ \Sigma_m(\alpha)\ \Sigma_n(\alpha),
\end{equation}
where
\begin{equation}
\Sigma_m(\alpha)=\sum_{b=1}^k\sum_{\{m\},\tilde m\neq m_{\pi(b)}} I_T(E_{m_{\pi(b)}}-E_{\tilde m})\ \delta_{\alpha}(\mathbf{i},\mathbf{i}),
\qquad
\Sigma_n(\alpha)=\sum_{\{n\}} \delta_{\alpha}(\mathbf{j},\mathbf{j}).
\end{equation}

For the $m$-pairs $P^{(m)}_r=\{\,\pi(r),\,k+r\,\}$ with $r\in\{1,\dots,k\}\setminus\{b\}$, the permutation $\alpha$ may implement either swap or non-swap. In contrast, the special pair $P^{(m)}_{b}=\{\pi(b),\,k+b\}$ corresponds to $(m_{\pi(b)},\tilde m)$ and only supports the non-swap, due to $\tilde m\neq m_{\pi(b)}$. Hence $G_m(\pi_b)\cong(\mathbb{Z}_2)^{k-1}$, and
\begin{equation}
\Sigma_m(\alpha)=
\begin{cases}
D^{k-1}\displaystyle\sum_{b=1}^k \sum_{\tilde{m}\neq m_{\pi(b)}} I_T(E_{\tilde{m}}-E_{m_{\pi(b)}})
= k\,D^{k}(D-1)\,\varepsilon_{H}(T), & \alpha \in G_m(\pi_b),\\[6pt]
O\!\left(D^k\,\varepsilon_{H}(T)\right), & \text{otherwise.}
\end{cases}
\end{equation}
For the $n$-pairs, the same reasoning as in the previous derivation applies: both swap and non-swap are allowed, and $G_n(\sigma)\cong(\mathbb{Z}_2)^k$. Therefore
\begin{equation}
\Sigma_n(\alpha)=
\begin{cases}
D^{k}, & \alpha \in G_n(\sigma),\\
O(D^{k-1}), & \text{otherwise.}
\end{cases}
\end{equation}

Collecting the leading terms gives
\begin{equation}
\label{eq:thm3_pair_intersection}
\Delta F_{(1)}^{(k)}
=
k (D-1)\,\varepsilon_{H}(T)\,
\bigl|G_m(\pi_b)\cap G_n(\sigma)\bigr|
\;+\;O(1/D).
\end{equation}

If, in addition, the relevant gaps are uniformly of order $\Delta E$, the sinc filter gives
$\varepsilon_{H}(T)=O((T\Delta E)^{-2})$.
With the GUE convention $\Delta E\sim 1/D$, this becomes
\begin{equation}
\Delta F_{(1)}^{(k)} = O(D^3 T^{-2}).
\end{equation}

By a similar analysis, the other leakage contribution
\begin{equation}
\label{eq:DeltaF2_def}
\begin{split}
\Delta F^{(k)}_{(2)}
&= \sum_{\pi,\sigma \in S_k}\sum_{b=1}^k \sum_{\{m_a\},\{n_a\}=1}^D  \sum_{\tilde{n}\neq n_{\sigma(b)}}
\left[\prod_{a=1}^k |U_{m_a n_a}|^2\right]
\left[\prod_{a\neq b} |U_{m_{\pi(a)} n_{\sigma(a)}}|^2\right]
|U_{m_{\sigma(b)}\, \tilde{n}}|^2\,
I_T(E_{n_{\sigma(b)}} - E_{\tilde{n}}).
\end{split}
\end{equation}
obeys the same bound,
\begin{equation}
\Delta F_{(2)}^{(k)} = O(D^3T^{-2}).
\end{equation}

Including the other leakage positions and the subleading Haar contractions, the 2SP part of Theorem~\ref{thm:imperfect} follows:
\begin{equation}
\mathbb E\!\left[\bigl|F^{(k)}_{\mathrm{2SP}}(T)-F^{(k)}_{\mathrm{2SP}}(\infty)\bigr|\right]
=O\!\bigl(D\varepsilon_H(T)\bigr)+O(D^{-1}).
\end{equation}
\end{proof}

\subsection{Three-step protocol}

\begin{proof}
We consider two representative leakage mechanisms.

\paragraph{Leakage on $m$ (example $\Delta F_{(1)}^{(k)}$).}
\begin{equation}
\label{eq:F4_def}
\begin{split}
\Delta F^{(k)}_{(1)}
=&
\sum_{\pi,\sigma,\tau,\upsilon\in S_k}\ \sum_{\{m,p,f,g\}=1}^D \sum_{b=1}^k\sum_{\tilde{m}\neq m_{\pi(b)}}
\prod_{a=1}^k
\Bigl(U^{(1)}_{m_a p_a}\,U^{(2)}_{p_a g_a}\,U^{*(2)}_{f_a g_a}\,U^{*(1)}_{m_a f_a}\Bigr) \\
&\times
\prod_{a\neq b}\Bigl(U^{(1)}_{m_{\pi(a)} p_{\sigma(a)}}\,U^{(2)}_{p_{\sigma(a)} g_{\tau(a)}}\,U^{*(2)}_{f_{\upsilon(a)} g_{\tau(a)}}\,U^{*(1)}_{m_{\pi(a)} f_{\upsilon(a)}}\Bigr)^{*}\\
&\times
\Bigl(U^{(1)}_{\tilde{m}p_{\sigma(b)}}\,U^{*(1)}_{\tilde{m}f_{\upsilon(b)}}\Bigr)^{*}\,
I_T(E_{m_{\pi(b)}} - E_{\tilde{m}}).
\end{split}
\end{equation}

We collect the $U^{(1)}$ factors (there are $2k$ unstarred and $2k$ starred). With $p=2k$, choose
\begin{equation}
\label{eq:U1_indices_mleak}
\mathbf i^{(1)}=(m_1,\dots,m_k,\ m_{\pi(1)},\dots,m_{\pi(b-1)},\tilde{m},m_{\pi(b+1)},\dots,m_{\pi(k)}),\qquad
\mathbf i^{\prime(1)}=\mathbf i^{(1)},
\end{equation}
\begin{equation}
\label{eq:U1_indices_j_mleak}
\mathbf j^{(1)}=(p_1,\dots,p_k,\ f_{\upsilon(1)},\dots,f_{\upsilon(k)}),\qquad
\mathbf j^{\prime(1)}=(f_1,\dots,f_k,\ p_{\sigma(1)},\dots,p_{\sigma(k)}).
\end{equation}

For fixed $(\pi,\sigma)$ we write
\begin{equation}
\Delta F_{(1)}^{(k)} = \sum_{\alpha\in S_{2k}}\mathrm{Wg}_D(\alpha)\ \Sigma_m(\alpha)\ \Sigma_n(\alpha),
\end{equation}
with
\begin{equation}
\Sigma_m(\alpha)=\sum_{b=1}^k\sum_{\{m\},\tilde m\neq m_{\pi(b)}} I_T(E_{m_{\pi(b)}}-E_{\tilde m})\ \delta_{\alpha}(\mathbf{i}^{(1)},\mathbf{i}^{(1)}),
\qquad
\Sigma_n(\alpha)=\sum_{\{p,f\}} \delta_{\alpha}(\mathbf{j}^{(1)},\mathbf{j}^{\prime(1)}).
\end{equation}

As in Theorem~2, the leading contribution would come from the full swap $\alpha=\alpha_k(\pi)$, but here swapping the special pair $(m_{\pi(b)},\tilde m)$ forces $\tilde m=m_{\pi(b)}$ and hence vanishes. Thus the leading nonzero term is $\alpha=\alpha_{k-1}(\pi_b)$ (swap on $k-1$ pairs, non-swap on the special pair):
\begin{equation}
    \Sigma_m(\alpha) = 
    \begin{cases}
         D^{k-1} \sum_b \sum_{\tilde{m}\neq m_{\pi(b)}} \delta_{\tilde{m}\, m_{\pi(b)}} I_T(E_{\tilde{m}} - E_{{m_{\pi(b)}}}) = 0  & \alpha = \alpha_k(\pi)\\
        D^{k-1} \sum_b \sum_{\tilde{m}\neq m_{\pi(b)}}  I_T(E_{\tilde{m}} - E_{{m_{\pi(b)}}}) = O(D^{k+1}\varepsilon_{H}(T)) & \alpha  = \alpha_{k-1}(\pi_b)
    \end{cases}
\end{equation}

Evaluating $\delta_{\alpha}(\mathbf j^{(1)},\mathbf j^{\prime(1)})$ shows that, at leading order, $\sigma=\pi$ and $\upsilon=\pi$ are selected, and
\begin{equation}
\Sigma_n(\alpha)=
\begin{cases}
D^{k}, & \alpha=\alpha_k(\pi)\ \ \text{and}\ \ \pi=\sigma=\upsilon,\\
D^{k-1}, & \alpha=\alpha_{k-1}(\pi_b)\ \ \text{and}\ \ \pi=\sigma=\upsilon,
\end{cases}
\end{equation}
while all other $(\sigma,\upsilon)$ are suppressed by $O(1/D)$. Hence
\begin{equation}
\Delta F^{(k)}_{(1)} = O(\varepsilon_{H}(T)).
\end{equation}
When the relevant gaps are uniformly of order $\Delta E$, using $\varepsilon_{H}(T)=O((T\Delta E)^{-2})$ and $\Delta E\sim 1/D$ yields
\begin{equation}
\Delta F^{(k)}_{(1)} = O(D^2 T^{-2}).
\end{equation}

\paragraph{Leakage on $p$ (example $\Delta F_{(2)}^{(k)}$).}
For leakage on $p$ we consider
\begin{equation}
\label{eq:F4_2_def}
\begin{split}
\Delta F^{(k)}_{(2)}
=&
\sum_{\pi,\sigma,\tau,\upsilon\in S_k}\ \sum_{\{m,p,f,g\}=1}^D \sum_{b=1}^k\sum_{\tilde{p}\neq p_{\sigma^{-1}(b)}}
\prod_{a=1}^k
\Bigl(U^{(1)}_{m_a p_a}\,U^{(2)}_{p_a g_a}\,U^{*(2)}_{f_a g_a}\,U^{*(1)}_{m_a f_a}\Bigr) \\
&\times
\prod_{a\neq b}\Bigl(U^{(1)}_{m_{\pi(a)} p_{\sigma(a)}}\,U^{(2)}_{p_{\sigma(a)} g_{\tau(a)}}\,U^{*(2)}_{f_{\upsilon(a)} g_{\tau(a)}}\,U^{*(1)}_{m_{\pi(a)} f_{\upsilon(a)}}\Bigr)^{*} \\
&\times
\Bigl(U^{(1)}_{m_{\pi(b)}\tilde{p}}\, U^{(2)}_{\tilde{p}g_{\tau(b)}}\Bigr)^{*}\,
I_T(E_{p_{\sigma(b)}} - E_{\tilde{p}}).
\end{split}
\end{equation}

Collect the $U^{(1)}$ factors as before, now with
\begin{equation}
\label{eq:U1_indices_pleak}
\mathbf i^{(1)}=(m_1,\dots,m_k,\ m_{\pi(1)},\dots,m_{\pi(k)}),\qquad
\mathbf i^{\prime(1)}=\mathbf i^{(1)},
\end{equation}
\begin{equation}
\label{eq:U1_indices_j_pleak}
\mathbf j^{(1)}=(p_1,\dots,p_k,\ f_{\upsilon(1)},\dots,f_{\upsilon(k)}),\qquad
\mathbf j^{\prime(1)}=(f_1,\dots,f_k,\ p_{\sigma(1)},\dots,p_{\sigma(b-1)},\tilde{p},p_{\sigma(b+1)},\dots,p_{\sigma(k)}).
\end{equation}

For fixed $(\pi,\sigma)$ we write
\begin{equation}
\Delta F_{(2)}^{(k)} = \sum_{\alpha\in S_{2k}}\mathrm{Wg}_D(\alpha)\ \Sigma_m(\alpha)\ \Sigma_n(\alpha),
\end{equation}
where
\begin{equation}
\Sigma_m(\alpha)=\sum_{\{m\}}\delta_{\alpha}(\mathbf i^{(1)},\mathbf i^{(1)}),
\qquad
\Sigma_n(\alpha)=\sum_{b=1}^k\sum_{\{p,f\},\tilde p\neq p_{\sigma(b)}} I_T(E_{p_{\sigma(b)}}-E_{\tilde p})\ \delta_{\alpha}(\mathbf j^{(1)},\mathbf j^{\prime(1)}).
\end{equation}

Again, the full swap $\alpha=\alpha_k(\sigma)$ would force $\tilde p=p_{\sigma(b)}$ and hence vanishes; the leading nonzero contribution is $\alpha=\alpha_{k-1}(\sigma_b)$:
\begin{equation}
    \Sigma_n(\alpha) = 
    \begin{cases}
         D^{k-1} \sum_b \sum_{\tilde{p}\neq p_{\sigma(b)}} \delta_{\tilde{m}\, m_{\sigma(b)}} I_T(E_{\tilde{p}} - E_{{p_{\sigma(b)}}}) = 0  & \alpha = \alpha_k(\sigma) \ \  \pi= \sigma=\upsilon\\
        D^{k-1} \sum_b \sum_{\tilde{p}\neq p_{\sigma(b)}} \delta_{f_{\upsilon(b)}\tilde{p}} I_T(E_{\tilde{p}} - E_{{m_{\sigma(b)}}}) = O(D^{k}\varepsilon_{H}(T)) & \alpha  = \alpha_{k-1}(\sigma_b) \ \  \pi= \sigma=\upsilon
    \end{cases}
\end{equation}
while
\begin{equation}
\Sigma_m(\alpha)=
\begin{cases}
D^{k}, & \alpha=\alpha_k(\sigma),\\
D^{k}, & \alpha=\alpha_{k-1}(\sigma_b),
\end{cases}
\end{equation}
at leading order (other permutations are suppressed by $O(1/D)$). Therefore
\begin{equation}
\Delta F^{(k)}_{(2)} = O(\varepsilon_{H}(T))
= O(D^2T^{-2})
\end{equation}
under the GUE convention $\Delta E\sim 1/D$.

Combining the leading leakage channels and the subleading Haar contractions gives the 3SP part of Theorem~\ref{thm:imperfect}:
\begin{equation}
\mathbb E\!\left[\bigl|F^{(k)}_{\mathrm{3SP}}(T)-F^{(k)}_{\mathrm{3SP}}(\infty)\bigr|\right]
=O\!\bigl(\varepsilon_H(T)\bigr)+O(D^{-1}).
\end{equation}
\end{proof}

\section{Appendix D: Symmetry-class dependence of the 2SP and class independence of the 3SP}
\label{app:symmetry_classes}

This Appendix proves the result underlying Fig.3 of the main text. The
2SP frame potential depends on the Wigner--Dyson symmetry class, while the
3SP frame potential converges to the Haar value $k!$ in all three classes.
The proof uses exact moments of the overlap matrix. For $\beta=4$, we assume
that the time filter resolves distinct Kramers doublets but not the two states
within each doublet. The perfect-filter projector therefore acts on doublet
labels.

Perfect time filtering reduces the 2SP frame potential Eq.(7) to positive moments of the overlap probabilities
$|U_{mn}|^2$. These moments depend on the Dyson class
\cite{Mehta2004,Altland:2021rqn}. By contrast, the 3SP frame potential Eq.(11) contains products of overlap amplitudes around a
loop of eigenbases. Interference removes all permutation channels except the
aligned channel, whose weight is independent of
the symmetry class in the leading order. 

\subsection{Threefold-way structure of the eigenbasis overlap matrices}
\label{app:threefold_overlaps}

Let $H_1$ and $H_2$ be independent draws from the same Wigner--Dyson class
with Dyson index $\beta$, and let $U_{mn}=\braket{E_m|\epsilon_n}$ denote
their eigenbasis overlap matrix. For $\beta=2$, the GUE eigenvectors are
independent Haar-random unitaries, so $U$ is Haar distributed on $U(D)$. For $\beta=1$, the GOE eigenvectors can be chosen real, and $U$ is Haar
distributed on $O(D)$. For $\beta=4$, each GSE level is a Kramers doublet.
We label the doublets by $\bar m,\bar n\in\{1,\dots,D/2\}$. The overlap
matrix is then a $(D/2)\times(D/2)$ array of quaternionic $2\times2$ blocks $U_{\bar m\bar n}$ and is Haar distributed on $Sp(D/2)\subset U(D)$, with
\begin{equation}
    U_{\bar m\bar n}=\begin{pmatrix} a & b\\ -b^{*} & a^{*}\end{pmatrix},
    \qquad a,b\in\mathbb{C}.
\end{equation}

For $\beta=4$, the time filter resolves the doublet weight
\begin{equation}
w_{\bar m\bar n}\equiv\sum_{m\in\bar m}\sum_{n\in\bar n}|U_{mn}|^{2}
=2\bigl(|a|^{2}+|b|^{2}\bigr).
\end{equation}
We use one normalized overlap matrix squares in all three classes:
\begin{equation}
q_{mn}\equiv|U_{mn}|^2\quad(\beta=1,2),
\qquad
q_{\bar m\bar n}\equiv\frac{w_{\bar m\bar n}}2=|a|^2+|b|^2\quad(\beta=4).
\label{eq:normalized_row_weight}
\end{equation}
Thus $\sum_n q_{mn}=1$ in every class.  In scalar probability sums such as the 2SP, the weight selected by the perfect time filter is $s_\beta q_{mn}$, where $s_1=s_2=1,s_4=2$.

Let $N_1=N_2=D$ and $N_4=D/2$ denote the number of normalized squared overlap matrix in one row.  Fix a row label $m$ and introduce independent real Gaussian variables $g_{mnr}\sim\mathcal N(0,1)$, with $n=1,\ldots,N_\beta$ and $r=1,\ldots,\beta$.  Define the corresponding random variable by $x_{mn}\equiv\sum_{r=1}^{\beta}g_{mnr}^2$.  Within the fixed row, the $x_{mn}$ are independent, with $x_{mn}\sim\chi_\beta^2=\Gamma(\alpha=\beta/2,\theta=2)$ that satisfies the Gamma distribution.  Writing $S_m\equiv\sum_nx_{mn}$, for $\beta=1$ one has $U_{mn}=g_{mn1}/\sqrt{S_m}$, while for $\beta=2$ one has
$U_{mn}=(g_{mn1}+\mathrm i g_{mn2})/\sqrt{S_m}$.  For $\beta=4$, with
$m,n$ understood as doublet labels, $a_{mn}=(g_{mn1}+\mathrm i g_{mn2})/\sqrt{S_m}$ and $b_{mn}=(g_{mn3}+\mathrm i g_{mn4})/\sqrt{S_m}$.  Consequently $q_{mn}=|U_{mn}|^2=x_{mn}/S_m$ for $\beta=1,2$, while
$q_{\bar m\bar n}=|a_{mn}|^2+|b_{mn}|^2=x_{mn}/S_m$ for $\beta=4$.
This construction describes one fixed squared overlap matrix row. Orthogonality constrains different rows, so they are not independent. It follows that
\begin{equation}
(q_{m1},\ldots,q_{mN_\beta})
\sim
(x_{m1}/S_m,\ldots,x_{mN_\beta}/S_m)
\sim
\operatorname{Dirichlet}(\alpha,\ldots,\alpha),
\qquad \alpha=\frac{\beta}{2}.
\label{eq:row_weight_dirichlet}
\end{equation}
where the Dirichlet means the standard Dirichlet distribution function.
The standard Dirichlet moments therefore give, for $i\ne j$,
\begin{equation}
\begin{aligned}
\mathbb E\,q_{mi}
&=\frac{\alpha}{N_\beta\alpha}=\frac{1}{N_\beta},\\[1ex]
M_\beta
&\equiv\mathbb E(q_{mi}q_{mj})
=\frac{\alpha^2}{(N_\beta\alpha)(N_\beta\alpha+1)}
=\frac{1}{N_\beta(N_\beta+2/\beta)},
\\[1ex]
\mathbb E\,q_{mi}^2
&=\frac{\alpha(\alpha+1)}{(N_\beta\alpha)(N_\beta\alpha+1)}
=\left(1+\frac{2}{\beta}\right)M_\beta,
\qquad
\mu_{\beta}\equiv\frac{\mathbb E\,q_{mi}^{2}}{M_\beta}
=1+\frac{2}{\beta}.
\end{aligned}
\label{eq:moment_ratios}
\end{equation}
For $\beta=1,2$, the diagonal moment $\mathbb E q_{mi}^2$ is precisely
$\mathbb E|U_{mi}|^4$.  For $\beta=4$, $w=2q$, so the same formula gives
the moments of the doublet weight without introducing a separate set of
moment definitions.

\subsection{Class dependence of the 2SP frame potential}
\label{app:2SP_class_covariant}

Eq.(7) of the main text
\begin{equation}
    \begin{split}
        &F^{(k)}_{\mathrm{2SP}}= 
        \sum_{\pi,\sigma \in S_k}^D\sum_{m_a,n_a=1}^D  \prod_{a=1}^k \Bigl|U^{(\bm{1})}_{m_a n_a} \Bigr|^2\,\Bigl|U^{(\bm{1})}_{m_{\pi(a)} n_{\sigma(a)}} \Bigr|^2 .
    \end{split}
\end{equation}
is a sum of positive overlap probabilities, so its permutation channels do not cancel. After reducing $(\pi,\sigma)$ to the relative permutation $\rho=\sigma\pi^{-1}$, each replica contains either $\mathbb E q_{mi}^2$ or
$\mathbb E(q_{mi}q_{mj})$. The Dyson-class dependence of these moments
determines the class dependence of the 2SP.

\begin{theorem}[Symmetry-class dependence of the 2SP]
\label{thm:2SP_all_classes}
Let $k$ be a fixed positive integer, and let the eigenframes be independent Haar draws from the Wigner--Dyson class $\beta\in\{1,2,4\}$. In the large-$D$ limit, the perfectly filtered 2SP frame potential satisfies
\begin{equation}
F^{(k)}_{\mathrm{2SP}}(\infty)
=\lambda_\beta^k\,k!
\sum_{j=0}^{k}\binom{k}{j}\,!(k-j)\,\mu_{\beta}^{\,j}
\left[1+O\!\left(\frac{k^{2}}{D}\right)\right],
\label{eq:2SP_class_estimate}
\end{equation}
where
\begin{equation}
(\lambda_\beta,\mu_\beta)=
\begin{cases}
\bigl(\frac{D}{D+2},3\bigr), & \beta=1,\\[1ex]
\bigl(\frac{D}{D+1},2\bigr), & \beta=2,\\[1ex]
\bigl(\frac{4D}{D+1},\frac{3}{2}\bigr), & \beta=4.
\end{cases}
\label{eq:class_parameters}
\end{equation}
For $\beta=4$, the time filter acts on Kramers-doublet labels.
\end{theorem}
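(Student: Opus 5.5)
We start from the perfect-filter expression Eq.~(7), written with the class-normalized weights of Eq.~\eqref{eq:normalized_row_weight}: for $\beta=1,2$ the summand is $\prod_a q_{m_a n_a}q_{m_{\pi(a)}n_{\sigma(a)}}$. For $\beta=4$ the filter acts on doublet labels, so each selected weight is $s_4 q=2q$ and the index sums run over $N_4=D/2$ labels. We then relabel the replica product as $\prod_{r=1}^k q_{m_r n_r}\,q_{m_r n_{\rho(r)}}$ with $\rho=\sigma\pi^{-1}$, as in the main text. For each fixed $\pi$, the map $\sigma\mapsto\rho$ is a bijection of $S_k$. This produces the overall factor $k!$ and reduces the problem to evaluating $\sum_{\rho\in S_k}\sum_{\{m\},\{n\}}\mathbb E\prod_r q_{m_r n_r}q_{m_r n_{\rho(r)}}$.

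Next we isolate the leading index configurations. Generic configurations have all $m_r$ distinct and all $n_r$ distinct. Coincidences among the $m$'s or among the $n$'s remove a free sum, so each costs $O(1/N_\beta)$ relative weight. The number of such coincidence patterns is $O(k^2)$, which gives the factor $[1+O(k^2/D)]$. In a generic configuration, replica $r$ carries $q_{m_r n_r}^2$ if $\rho(r)=r$, and $q_{m_r n_r}q_{m_r n_{\rho(r)}}$ with two distinct column indices otherwise.

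The key step is to factorize the expectation over distinct rows. For different rows $m_r\neq m_{r'}$, the $q$'s are not independent because of orthogonality. However, their joint moments of fixed degree factorize up to a relative $O(1/D)$ correction: for $\beta=2$ this follows from the leading Weingarten rule \eqref{eq:leading_rule}, and for $\beta=1,4$ from the analogous leading-order orthogonal and symplectic rules, or equivalently from the Gaussian-row representation above Eq.~\eqref{eq:row_weight_dirichlet}, in which rows are asymptotically independent. After factorization, each fixed point of $\rho$ contributes $\mathbb E q^2=\mu_\beta M_\beta$ and each non-fixed point contributes $M_\beta$, with the moments taken from Eq.~\eqref{eq:moment_ratios}. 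The generic index sums contribute $N_\beta^{2k}(1+O(k^2/D))$, and multiplying by $s_\beta^{2k}$ gives
\[
\sum_\rho \mu_\beta^{\mathrm{fix}(\rho)}\bigl(s_\beta^2N_\beta^2M_\beta\bigr)^k .
\]
Substituting $M_\beta=1/[N_\beta(N_\beta+2/\beta)]$ gives $s_\beta^2N_\beta^2M_\beta=D/(D+2)$, $D/(D+1)$ and $4\cdot(D/2)/(D/2+1/2)=4D/(D+1)$ for $\beta=1,2,4$, which are exactly the $\lambda_\beta$. Counting permutations by their number $j$ of fixed points, $\sum_\rho\mu_\beta^{\mathrm{fix}(\rho)}=\sum_j\binom kj\,!(k-j)\,\mu_\beta^j$, which is the claimed formula.

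The main obstacle is the cross-row factorization for the orthogonal and symplectic classes at the claimed $O(k^2/D)$ relative accuracy. We would first try to control it using the leading-order orthogonal and symplectic Weingarten asymptotics. The point to show is that contractions pairing indices across different rows lose at least one power of $D$, while the intra-row, same-row pairings reproduce the Dirichlet moments exactly. A secondary point is to check that exact $n$-coincidences between different replicas, such as $n_{\rho(r)}=n_{r'}$, which change $q^2$ versus $qq'$ assignments, are indeed absorbed in the $O(k^2/D)$ error rather than shifting the leading term.
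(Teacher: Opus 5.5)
Your proposal is correct and follows essentially the same route as the paper: relabel to the relative permutation $\rho=\sigma\pi^{-1}$ (which gives the factor $k!$), restrict to distinct row and column labels, assign $\mu_\beta M_\beta$ to fixed points and $M_\beta$ to non-fixed points via the Dirichlet row moments, identify $\lambda_\beta=s_\beta^2N_\beta^2M_\beta$, and count permutations by their fixed points. The cross-row factorization you single out as the main obstacle is exactly the step the paper covers only with a one-line remark that coincident labels and correlations between distinct rows ``account for the stated correction.'' Your version is therefore, if anything, more explicit about what remains to be justified.
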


\begin{proof}
The number of row or column labels is $N_1=N_2=D$ and $N_4=D/2$.  Relabeling a channel $(\pi,\sigma)$ by $r=\pi(a)$ gives
\begin{equation}
\prod_{a=1}^k (s_\beta q_{m_an_a})(s_\beta q_{m_{\pi(a)}n_{\sigma(a)}})
=s_\beta^{2k}\prod_{r=1}^k q_{m_rn_r}q_{m_rn_{\rho(r)}},
\qquad \rho\equiv\sigma\pi^{-1}.
\label{eq:2SP_row_identity}
\end{equation}
Thus the two factors belonging to replica $r$ lie in the same row. If
$\rho(r)=r$, they give $\mathbb E q_{mi}^2$. For distinct column labels and
$\rho(r)\ne r$, they give $\mathbb E(q_{mi}q_{mj})$ with $i\ne j$.

Define
\begin{equation}
\lambda_\beta\equiv s_\beta^2N_\beta^2M_\beta.
\end{equation}
Using $(N_1,N_2,N_4)=(D,D,D/2)$, $(s_1,s_2,s_4)=(1,1,2)$, and Eq.~\eqref{eq:moment_ratios} gives Eq.~\eqref{eq:class_parameters}.  Restricting to distinct row and column labels, a channel with $f=\operatorname{fix}(\rho)$ contributes
\begin{equation}
\begin{split}
\mathcal C_\rho^{\mathrm{dist}}
&=s_\beta^{2k}
\sum_{\{m,n\}}^{\mathrm{dist}}
\mathbb E\prod_{r=1}^k q_{m_rn_r}q_{m_rn_{\rho(r)}}\\
&=s_\beta^{2k}N_\beta^{2k}
M_\beta^{k-f}(\mu_\beta M_\beta)^f
\left[1+O\!\left(\frac{k^2}{D}\right)\right]\\
&=\bigl(s_\beta^2N_\beta^2M_\beta\bigr)^k\mu_\beta^f
\left[1+O\!\left(\frac{k^2}{D}\right)\right]\\
&=\lambda_\beta^k\mu_\beta^f
\left[1+O\!\left(\frac{k^2}{D}\right)\right].
\end{split}
\label{eq:2SP_factorization}
\end{equation}
The second line follows because the $k$ row labels and $k$ column labels give $N_\beta^{2k}$ choices.  Each of the $k-f$ non-fixed replicas contributes off-diagonal term $M_\beta= \mathbb E q_{mi} q_{mj}$, whereas each of the $f$ fixed replicas contributes diagonal term $\mathbb E q_{mi}^2=\mu_\beta M_\beta$.  Coincident labels and correlations between distinct squared overlap matrix account for the stated correction. 

One important observation is that both diagonal and off-diagonal terms contributes the leading order contribution, and therefore the result is symmetry class dependent.

Finally, for each relative permutation $\rho=\sigma\pi^{-1}$ there are $k!$ choices of $\pi$, after which $\sigma=\rho\pi$ is fixed.  Equation~\eqref{eq:2SP_factorization} thus gives
\begin{equation}
F^{(k)}_{\mathrm{2SP}}(\infty)
=\lambda_\beta^k\,k!
\sum_{\rho\in S_k}\mu_\beta^{\operatorname{fix}(\rho)}
\left[1+O\!\left(\frac{k^{2}}{D}\right)\right].
\label{eq:2SP_before_rencontres}
\end{equation}
There are $\binom{k}{j}\,!(k-j)$ permutations with exactly $j$ fixed points.  Hence
\begin{equation}
\sum_{\rho\in S_k}\mu^{\operatorname{fix}(\rho)}
=\sum_{j=0}^{k}\binom{k}{j}\,!(k-j)\,\mu^j,
\label{eq:rencontres}
\end{equation}
Substitution into Eq.~\eqref{eq:2SP_before_rencontres} proves Eq.~\eqref{eq:2SP_class_estimate}.
\end{proof}

The closed form in Eq.~\eqref{eq:rencontres} also gives
\begin{equation}
F^{(k)}_{\mathrm{2SP}}(\infty)
=\lambda_\beta^k(k!)^{2}
\sum_{\ell=0}^{k}\frac{(\mu_\beta-1)^{\ell}}{\ell!}
\left[1+O\!\left(\frac{k^{2}}{D}\right)\right],
\label{eq:2SP_closed_form}
\end{equation}
so that in the regime $1\ll k\ll\sqrt D$,
\begin{equation}
F^{(k)}_{\mathrm{2SP}}(\infty)\simeq
\lambda_\beta^k(k!)^{2}\,e^{\mu_\beta-1}.
\label{eq:2SP_asymptotic}
\end{equation}
Thus the 2SP is of order $(k!)^2$, rather than the Haar value $k!$.  Since $\lambda_4\to4$, the GSE has an additional asymptotic factor $4^k$.

\subsection{Class independence of the 3SP frame potential}
\label{app:3SP_class_blind}

The 3SP frame potential Eq.(11) of the main text 
\begin{equation}
\label{eq:3step_FP_perfect}
\begin{split}
    F^{(k)}_{\mathrm{3SP}}
=&\sum_{\pi,\sigma,\tau,\upsilon\in S_k}\sum_{\{m,p,f,g\}=1}^D
\prod_{a=1}^k
\Bigl(U^{(1)}_{m_a p_a}\,U^{(2)}_{p_a g_a}\,U^{*(2)}_{f_a g_a}\,U^{*(1)}_{m_a f_a}\Bigr)  \Bigl(U^{(1)}_{m_{\pi(a)} p_{\sigma(a)}}\,U^{(2)}_{p_{\sigma(a)} g_{\tau(a)}}\,U^{*(2)}_{f_{\upsilon(a)} g_{\tau(a)}}\,U^{*(1)}_{m_{\pi(a)} f_{\upsilon(a)}}\Bigr)^{*},
\end{split}
\end{equation} 
contains the loop quantity
$A_{mpgf}=U^{(1)}_{mp}\,U^{(2)}_{pg}\,U^{(2)*}_{fg}\,U^{(1)*}_{mf}$,
a product of overlap \emph{amplitudes}. Its leading class independence follows from row symmetry and the two moments used above.

\begin{theorem}[Class independence of the 3SP]
\label{thm:3SP_all_classes}
For fixed $k$ and independent Haar eigenframes in the orthogonal, unitary, or symplectic class,
\begin{equation}
F^{(k)}_{\mathrm{3SP}}(\infty)
=k!\left[1+O_{\beta}\!\left(\frac{k^2}{D}\right)\right]
=k!+O_{k,\beta}(D^{-1}).
\end{equation}
For $\beta=4$, the perfect time filter is understood as a projector onto distinct Kramers doublets.
\end{theorem}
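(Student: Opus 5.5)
We first reduce the problem to one channel. Starting from Eq.~\eqref{eq:3step_FP_perfect}, we take expectations over the two independent overlap matrices $U^{(1)},U^{(2)}$, drawn Haar on $O(D)$, $U(D)$, or $Sp(D/2)$. The average factorizes into an $U^{(1)}$-part and an $U^{(2)}$-part, exactly as in the unitary proof of Theorem~\ref{thm:perfect_filter}. The key structural input is a sign/phase symmetry available in all three classes: invariance of the Haar measure under right multiplication by diagonal sign matrices ($\beta=1$), diagonal phases ($\beta=2$), or diagonal unit quaternions ($\beta=4$), together with the analogous left invariance. Under $U^{(1)}\to U^{(1)}D_R$ with $D_R=\mathrm{diag}(s_p)$, each column label $p$ or $f$ picks up a factor $s$ per occurrence.

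Step one is to fix $\pi$ and relabel replicas by $r=\pi(a)$, as in Eq.~\eqref{eq:2SP_row_identity}. The $U^{(1)}$-factors then carry the column labels $p_r$ and $f_r$ (unconjugated and conjugated) from replica $r$, and $p_{\sigma\pi^{-1}(r)}$ and $f_{\upsilon\pi^{-1}(r)}$ from the conjugate replica. Step two restricts to generic index configurations: all $m_r$ distinct, all $p_r,f_r$ distinct from each other, and all $g_r$ distinct. Coincident configurations are fewer by at least a factor $D$, and since each moment is bounded by $O(D^{-2k})$ per matrix (Dirichlet moments, Eq.~\eqref{eq:moment_ratios}), they contribute only $O(k^2/D)$ relative to the leading term. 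On the generic configurations, sign/phase invariance in the $p$- and $f$-columns makes the expectation vanish unless every $p$ label and every $f$ label appears an even number of times, with the unconjugated and conjugated occurrences matched. This forces $\sigma=\upsilon=\pi$. For $\beta=1$, where only sign parity is required, one must check that parity alone still forces the matching; since each label appears exactly twice when the labels are distinct, parity does imply pairing. Applying the same argument to the $g$ columns of $U^{(2)}$ forces $\tau=\pi$. Hence only $k!$ aligned channels survive on generic labels.

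Step three evaluates an aligned channel. Each replica then contributes $|U^{(1)}_{m_rp_r}|^2|U^{(1)}_{m_rf_r}|^2|U^{(2)}_{p_rg_r}|^2|U^{(2)}_{f_rg_r}|^2$, with $p_r\neq f_r$. These are off-diagonal same-row products in $U^{(1)}$ and same-column products in $U^{(2)}$. For distinct labels, each replica contributes the off-diagonal moment $M_\beta$ for each matrix, up to $1+O(k^2/D)$ corrections from inter-row correlations. Summing over $N_\beta^{4k}$ labels, with the filter weights $s_\beta$ and the quaternionic trace normalization for $\beta=4$, gives $(N_\beta^2 M_\beta\cdot c_\beta)^{2k}$. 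Here $c_\beta$ is the normalization of the loop trace, which we must check equals $1+O(1/D)$ in every class. The point is that the diagonal moment $\mu_\beta M_\beta$, which carries the class dependence in the 2SP, appears only when $p_r=f_r$. That configuration is a coincident one, so it is relegated to $O(1/D)$. This is exactly why the 3SP is class-blind while the 2SP is not.

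The main obstacle is making the $\beta=4$ case rigorous. Overlaps there are quaternion blocks, and the filter acts on doublet labels, so the loop $A$ becomes a trace of a product of $2\times 2$ blocks rather than a scalar. The normalization of the trace over the two states within each doublet must therefore cancel the factor $\lambda_4\to4$ that appears in the 2SP. I would handle this first by using $Sp(1)\cong SU(2)$ invariance on each block to average the block phases explicitly. This should reduce the per-replica weight to $\tfrac12\mathrm{tr}$ of products of $q$-weighted identity blocks, and I would check the resulting constant against $k=1$, where $F^{(1)}=1$ must hold. A secondary task is to bound the inter-row correlations uniformly in the orthogonal and symplectic classes, since Appendix~A's Weingarten rule is only unitary. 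For this I would use the Gaussian-row representation of Eq.~\eqref{eq:row_weight_dirichlet} together with Gram–Schmidt corrections of size $O(1/\sqrt D)$ per row to get the $O(k^2/D)$ error.
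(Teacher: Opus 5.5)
Your outline matches the paper's proof. The shared steps are: restrict to distinct labels, force the alignment $\sigma=\upsilon=\tau=\pi$, evaluate the aligned channel with the off-diagonal moment $M_\beta$ (so $\mu_\beta$ enters only at $p_a=f_a$), and use an $SU(2)$ orientation average for $\beta=4$. The gap is in the alignment step, which is the heart of class independence: the symmetry you invoke does not deliver it.

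Left and right multiplication by diagonal signs, phases or unit quaternions only constrains how often each row index and each column index occurs starred versus unstarred. In the 3SP integrand this is balanced in every channel:
\begin{itemize}
\item column $p_c$ of $U^{(1)}$ occurs unstarred in $U^{(1)}_{m_cp_c}$ and starred in $U^{(1)*}_{m_{\pi\sigma^{-1}(c)}p_c}$, whatever $\sigma$ is;
\item column $f_c$ is likewise balanced for every $\upsilon$, and each row $m_r$ carries two starred and two unstarred entries;
\item the same holds for rows $p_c,f_c$ and columns $g_c$ of $U^{(2)}$, for every $\tau$;
\item for $\beta=1$ each label occurs exactly twice, so the parity condition is automatic too.
\end{itemize}
Hence these invariances do not distinguish the channels at all.

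What you actually need is matching inside a fixed row: $p_{\sigma\pi^{-1}(r)}=p_r$ and $f_{\upsilon\pi^{-1}(r)}=f_r$ in row $m_r$. That is not a consequence of any exact symmetry. For Haar $U(D)$, $\mathbb E[U_{11}U_{22}U^*_{12}U^*_{21}]=-1/[D(D^2-1)]\neq0$, although this moment is invariant under all diagonal rotations. For $k=2$, $\pi=\upsilon=\mathrm{id}$, $\sigma=(12)$, the $U^{(1)}$ factor is exactly $U^{(1)}_{m_1p_1}U^{(1)}_{m_2p_2}U^{(1)*}_{m_1p_2}U^{(1)*}_{m_2p_1}|U^{(1)}_{m_1f_1}|^2|U^{(1)}_{m_2f_2}|^2$. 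Its average on distinct labels is nonzero, of relative order $1/D$. So your claim that only the $k!$ aligned channels survive on generic labels is false as an exact statement, and your argument gives no suppression of the other channels.

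The paper instead uses a row-wise symmetry. A single row of $U^{(1)}$ is uniform on the unit sphere of $\mathbb R^D$, $\mathbb C^D$ or $\mathbb H^{D/2}$, so it is invariant under independent sign, phase or quaternion rotations of each coordinate. Applied to row $m_r$ (entries at $p_r$, $f_r$, $p_{\sigma\pi^{-1}(r)}$, $f_{\upsilon\pi^{-1}(r)}$), this forces $\sigma=\upsilon=\pi$. Applied to row $p_r$ of $U^{(2)}$, it forces $\tau=\pi$. Because orthogonality couples different rows, this gives vanishing only for the leading term; the paper accordingly says ``a nonzero leading average requires''. The non-aligned channels are therefore $O(1/D)$, not zero.

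To close your gap, use either of two repairs:
\begin{itemize}
\item move the Gaussian-row/Gram--Schmidt comparison you planned for the aligned channel to the alignment step itself, since for independent Gaussian rows the coordinatewise invariance is exact;
\item or use the leading-order Weingarten rule in each class (unitary, orthogonal, symplectic), whose leading term requires the same pairing on row and column positions, i.e.\ entrywise matching.
\end{itemize}
The $(k!)^4-k!$ non-aligned channels then contribute $O_{k,\beta}(D^{-1})$, which is what the second form of the error absorbs. Your remaining steps agree with the paper: the collision-sector bound, $(N_\beta^2M_\beta)^{2k}\to1$, and the orientation average $\mathbb E_R|\mathrm{tr}\,R|^2=1$ for $\beta=4$.
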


\begin{proof}
After taking the complex conjugate of the second parenthesis in Eq.(11), the channel $(\pi,\sigma,\tau,\upsilon)$ is
\begin{equation}
\begin{split}
\mathcal C_{\pi\sigma\tau\upsilon}
=\sum_{\{m,p,f,g\}}
\mathbb E\prod_{a=1}^{k}&
U^{(1)}_{m_ap_a}\,U^{(1)*}_{m_af_a}\,
U^{(1)*}_{m_{\pi(a)}p_{\sigma(a)}}\,
U^{(1)}_{m_{\pi(a)}f_{\upsilon(a)}}
U^{(2)}_{p_ag_a}\,U^{(2)*}_{f_ag_a}\,
U^{(2)*}_{p_{\sigma(a)}g_{\tau(a)}}\,
U^{(2)}_{f_{\upsilon(a)}g_{\tau(a)}} .
\end{split}
\label{eq:3SP_class_channel}
\end{equation}
Restrict first to pairwise-distinct $m_a,g_a$ and to pairwise-distinct labels in $\{p_a,f_a\}_{a=1}^{k}$. The omitted collision sector is an $O(k^2/D)$ fraction of the sum. Row $m_r$ of $U^{(1)}$ then contains
\begin{equation*}
U^{(1)}_{m_rp_r},\quad U^{(1)*}_{m_rf_r},\quad
U^{(1)*}_{m_rp_{\sigma\pi^{-1}(r)}},\quad
U^{(1)}_{m_rf_{\upsilon\pi^{-1}(r)}}.
\end{equation*}
For a real matrix, each coordinate must occur an even number of times. For a complex matrix, each coordinate must occur equally often starred and unstarred. For a quaternionic matrix, the same condition applies to each block under unit-quaternion rotations. Since the labels are distinct, a nonzero leading average requires
$\sigma\pi^{-1}(r)=r
$ and $\upsilon\pi^{-1}(r)=r$ for every $r$,
so $\sigma=\upsilon=\pi$.  Row $p_r$ of $U^{(2)}$ then contains a component at $g_r$ and its conjugate at $g_{\tau\pi^{-1}(r)}$, which similarly requires $\tau=\pi$.  Hence only $\pi=\sigma=\tau=\upsilon$ survives outside the collision sector in every class.

Let $q^{(1)}$ and $q^{(2)}$ be the normalized row weights associated with $U^{(1)}$ and $U^{(2)}$.  After setting $\pi=\sigma=\tau=\upsilon$ and relabeling the replicas, the distinct-label part of an aligned channel is
\begin{equation}
\mathcal C_{\pi\pi\pi\pi}^{\mathrm{dist}}
=\sum_{\{m,p,f,g\}}^{\mathrm{dist}}
\mathbb E\prod_{a=1}^k
q^{(1)}_{m_ap_a}q^{(1)}_{m_af_a}
q^{(2)}_{p_ag_a}q^{(2)}_{f_ag_a}.
\label{eq:3SP_aligned_q}
\end{equation}
For $\beta=1,2$, Eq.~\eqref{eq:3SP_aligned_q} follows directly by taking the absolute square of the aligned amplitude loop.  For $\beta=4$, write each nonzero block as $U_{\bar m\bar p}=\sqrt{q_{\bar m\bar p}}R_{\bar m\bar p}$ with $R_{\bar m\bar p}\in SU(2)$.  The block orientations give $\mathbb E_R|\operatorname{tr}R|^2=1$, so the same expression holds after the orientation average, with barred block labels.

In an aligned channel the two independent overlap matrices can be averaged separately.  Since the distinct-label contribution for $p_a\ne f_a$ is the leading order contribution, which only involves off-diagonal term $\mathbb E(q_{mi}q_{mj})= M_\beta$. The final result is obtained directly from Eq.~\eqref{eq:3SP_aligned_q}:
\begin{equation}
\begin{aligned}
\mathcal C_{\pi\pi\pi\pi}
&=\bigl(N_\beta^2M_\beta\bigr)^{2k}
\left[1+O_\beta\!\left(\frac{k^2}{D}\right)\right]=1+O_\beta\!\left(\frac{k^2}{D}\right).
\end{aligned}
\end{equation}
Here $(N_1,N_2,N_4)=(D,D,D/2)$. The condition $p_a=f_a$ removes one free
summation, reducing the number of configurations by $1/N_\beta=O(1/D)$, where $\mu_{\beta}$ only enters in the subleading term. The trace of a
coincident GSE block also enters in the subleading term. The $k!$ choices of $\pi$ prove the theorem. The result is asymptotic, and the three classes don't need to agree at finite $D$.
\end{proof}

At $k=1$ the finite-size correction follows by separating $p\ne f$ and $p=f$.  For $\beta=1,2$,
\begin{equation}
\begin{split}
F^{(1)}_{\mathrm{3SP}}(\infty)
=\sum_{m,g}\Bigg[&
\sum_{\substack{p,f\\p\ne f}}
\mathbb E\!\left(q^{(1)}_{mp}q^{(1)}_{mf}\right)
\mathbb E\!\left(q^{(2)}_{pg}q^{(2)}_{fg}\right)+\sum_p
\mathbb E\!\left[(q^{(1)}_{mp})^2\right]
\mathbb E\!\left[(q^{(2)}_{pg})^2\right]\Bigg].
\end{split}
\label{eq:3SP_k1_split}
\end{equation}
There are $N_{\beta}^2$ choices of $(m,g)$, $N_{\beta}(N_{\beta}-1)$ ordered pairs $(p,f)$ with $p\ne f$, and $N_{\beta}$ diagonal choices $p=f$. 

Eq.~\eqref{eq:3SP_k1_split} gives
\begin{equation}
\begin{split}
F^{(1)}_{\mathrm{3SP}}(\infty)
&=N_{\beta}^2\left[N_{\beta}(N_{\beta}-1)M_\beta^2+N_{\beta}(\mu_\beta M_\beta)^2\right]=\begin{cases}
\dfrac{D(D+8)}{(D+2)^2},&\beta=1,\\[2ex]
\dfrac{D(D+3)}{(D+1)^2},&\beta=2.
\end{cases}
\end{split}
\label{eq:3SP_k1_beta12_exact}
\end{equation}
For $\beta=4$, the same counting uses $N_4=D/2$ block labels. The $p=f$ has an additional factor $4$, because its loop is $q^{(1)}_{\bar m\bar p}q^{(2)}_{\bar p\bar g}\mathbbm 1_2$ and its trace squared is $4(q^{(1)}_{\bar m\bar p})^2(q^{(2)}_{\bar p\bar g})^2$.  Therefore
\begin{equation}
\begin{aligned}
F^{(1)}_{\mathrm{3SP}}(\infty)\big|_{\beta=4}
&=N_4^2\left[N_4(N_4-1)M_4^2
+4N_4(\mu_4M_4)^2\right]\\
&=(N_4^2M_4)^2
\left[1+\frac{4\mu_4^2-1}{N_4}\right]\\
&
=\frac{D(D+16)}{(D+1)^2}.
\end{aligned}
\label{eq:3SP_k1_beta4_exact}
\end{equation}
Equations~\eqref{eq:3SP_k1_beta12_exact} and
\eqref{eq:3SP_k1_beta4_exact} approach $1$ with excesses $4/D$, $1/D$,
and $14/D$ for GOE, GUE, and GSE, respectively. At the dimensions used in Fig.(3) of the main text, the predicted residuals are $0.053$ at $D=70$,
$0.017$ at $D=56$, and $0.055$ at $D=252$. The corresponding measured
values are $0.056$, $0.021$, and $0.059$. Thus the symmetry class controls
the finite-size correction through $\mathbb E q_{mi}^2$ and
$\mathbb E(q_{mi}q_{mj})$, but not the common large-$D$ limit.
\hfill$\square$

\section{Appendix E: The experimental observables that distinguishes 2SP and 3SP}
In this appendix, we discuss several concrete experimental diagnostics that can distinguish the 2SP from the 3SP.

The most direct protocol is to measure an entangled return probability. Introduce a reference system $R$ with the same Hilbert-space dimension $D$ as the system $S$, and prepare
\begin{equation}
    |\Phi_D\rangle_{SR} =
\frac{1}{\sqrt D}
\sum_{j=1}^{D}|j\rangle_S|j\rangle_R.
\end{equation}
For two independently sampled unitaries $U,V$ from the temporal ensemble, apply $V$ followed by $U^\dagger$ to $S$. The probability of returning to $|\Phi_D\rangle$ is
\begin{equation}
    p_\Phi(U,V) =
\frac{|\Tr(U^\dagger V)|^2}{D^2}.
\end{equation}
Therefore, the $k$-th frame potential of this ensemble is given by
\begin{equation}
    F_{\mathcal E}^{(k)} =
D^{2k}\,
\mathbb E_{U,V\sim\mathcal E}
\left[p_\Phi(U,V)^k\right].
\end{equation}
Thus, repeated measurements of this return probability provide a direct experimental reconstruction of the frame potential. In our setting, this measurement would clearly distinguish the two protocols: the 2SP saturates above the Haar value, whereas the 3SP approaches $F_{\mathrm{Haar}}^{(k)}=k!$.

This protocol requires implementation of the inverse evolution and a projection onto the maximally entangled state, but it directly probes the quantity used throughout our theoretical analysis.

A complementary approach is to measure ensemble-averaged higher-order OTOCs. For example,
\begin{equation}
    C_{\mathbf A,\mathbf B}^{(k)}(U) =
\frac{1}{D}
\Tr\!\left[
A_1U^\dagger B_1U\cdots
A_kU^\dagger B_kU
\right].
\end{equation}
The complete set of such $2k$-point correlators determines the $k$-th moment channel and is directly related to the frame potential~\cite{Roberts_2017}. Higher-order OTOCs have also become experimentally accessible in quantum processors~\cite{GoogleQuantumAI2025ConstructiveInterference}. While complete certification would require many correlators, sampling a subset provides a practical diagnostic of the difference between the 2SP and 3SP.

Simpler witnesses can also be obtained from the output states. For a fixed input state $|\psi\rangle$, an exact unitary $k$-design reproduces Haar output-probability moments
\begin{equation}
    \mathbb E_U
|\langle x|U|\psi\rangle|^{2m} = \frac{m!}{D(D+1)\cdots(D+m-1)},
\qquad m\leq k,
\end{equation}
also the Haar values of subsystem quantities such as the quantum purity
\begin{equation}
    \mathbb E_{\rm Haar}\Tr\rho_A^2 =
\frac{d_A+d_B}{d_Ad_B+1}.
\end{equation}

These quantities can be accessed using randomized measurements \cite{Emerson_2005,Knill_2008} and Rényi-entropy protocols already implemented experimentally. They do not by themselves constitute complete certification of a unitary $k$-design, but they provide experimentally convenient witnesses of Haar-like behavior.


\end{document}